\newtheorem{theorem}{Theorem}[section]
\newtheorem{lemma}[theorem]{Lemma}
\newtheorem{proposition}[theorem]{Proposition}
\newtheorem{corollary}[theorem]{Corollary}
{\theoremstyle{remark} \newtheorem{remark}[theorem]{Remark}}
\newcommand{\CC}{{\mathcal C}}
\newcommand{\FF}{{\mathcal F}}
\newcommand{\XX}{{\mathcal X}}
\newcommand{\C}{{\mathbb C}}
\newcommand{\R}{{\mathbb R}}
\newcommand{\real}{{\operatorname{Re}}}
\newcommand{\dist}{{\operatorname{dist}}}
\newcommand{\Tr}{{\operatorname{t}}}
\newcommand{\Ran}{{\operatorname{rn}}}
\newcommand{\Ker}{{\operatorname{kr}}}
\newcommand{\Capa}{{\operatorname{Cap}}}
\newcommand{\Area}{{\operatorname{Area}}}
\newcommand{\Volume}{{\operatorname{Vol}}}
\title[Isoperimetric-type inequality for electrostatic shell interactions]{An isoperimetric-type inequality for electrostatic shell interactions for Dirac operators}
\author[N. Arrizabalaga, A. Mas, L. Vega]{Naiara Arrizabalaga, Albert Mas, Luis Vega}
\date{}
\subjclass[2010]{Primary 81Q10, Secondary 35Q40.}
\keywords{Dirac operator, shell interaction, singular integral.}
\thanks{Arrizabalaga was supported in part by MTM2011-24054 and IT641-13.
Mas was supported by the {\em Juan de la Cierva} program JCI2012-14073 (MEC, Gobierno de Espa\~na), ERC grant 320501 of the European Research Council (FP7/2007-2013), MTM2011-27739 and  MTM2010-16232 (MICINN, Gobierno de Espa\~na), and IT-641-13 (DEUI, Gobierno Vasco). Vega was partially supported by SEV-2013-0323, MTM2011-24054 and IT641-13.}
\address{N. Arrizabalaga. Departamento de Matem\'aticas, Universidad del Pa\'is Vasco/Euskal Herriko Unibertsitatea, UPV/EHU, 48080 Bilbao (Spain)}
\email{naiara.arrizabalaga@ehu.es}
\address{A. Mas. Departament de Matem\`atica Aplicada I,
ETSEIB, Universitat Polit\`ecnica de Catalunya, Avda. Diagonal 647, 08028 Barcelona (Spain)}
\email{amasblesa@gmail.com}
\address{L. Vega. Departamento de Matem\'aticas, Universidad del Pa\'is Vasco/Euskal Herriko Unibertsitatea, UPV/EHU, 48080 Bilbao and BCAM, Alameda de Mazarredo, 14
E-48009 Bilbao (Spain)}
\email{luis.vega@ehu.es, lvega@bcamath.org}
\begin{document}

\begin{abstract}
In this article we investigate spectral properties of the coupling $H+V_\lambda$, where $H=-i\alpha\cdot\nabla
+m\beta$ is the free Dirac operator in $\R^3$, $m>0$ and $V_\lambda$ is an electrostatic shell potential (which depends on a parameter $\lambda\in\R$) located on the boundary of a smooth domain in $\R^3$. Our main result is an isoperimetric-type inequality for the admissible range of $\lambda$'s for which the coupling $H+V_\lambda$ generates pure point spectrum in $(-m,m)$. That the ball is the unique optimizer of this inequality is also shown. Regarding some ingredients of the proof, we make use of the Birman-Schwinger principle adapted to our setting in order to prove some monotonicity property of the admissible $\lambda$'s, and we use this to relate the endpoints of the admissible range of $\lambda$'s to the sharp constant of a quadratic form inequality, from which the isoperimetric-type inequality is derived.
\end{abstract}

\maketitle

\section{Introduction}
We investigate spectral properties of operators that are obtained as the coupling of the free Dirac operator in $\R^3$ with singular measure-valued potentials.
Given $m\geq0$, the free Dirac operator in $\R^3$ is defined by
$H=-i\alpha\cdot\nabla+m\beta,$
where $\alpha=(\alpha_1,\alpha_2,\alpha_3)$,
\begin{equation}\label{freedirac}
\begin{split}
\alpha_j
=&\left(\begin{array}{cc} 0 & \sigma_j\\
\sigma_j & 0 \end{array}\right)
\quad\text{for }j=1,2,3,\quad
\beta
=\left(\begin{array}{cc} I_2 & 0\\
0 & -I_2 \end{array}\right),\quad
I_2
=\left(\begin{array}{cc} 1 & 0\\
0 & 1 \end{array}\right),\\
&\text{and}\quad
\sigma_1
=\left(\begin{array}{cc} 0 & 1\\
1 & 0 \end{array}\right),
\quad\sigma_2
=\left(\begin{array}{cc} 0 & -i\\
i & 0 \end{array}\right),
\quad\sigma_3
=\left(\begin{array}{cc} 1 & 0\\
0 & -1 \end{array}\right)
\end{split}
\end{equation}
compose the family of {\em Pauli matrices}. Although one can take $m=0$ in the definition of $H$, throughout this article we always assume $m>0$ to allow the existence of a nontrivial pure point spectrum in the interval $(-m, m)$ for the corresponding couplings.

Following \cite{AMV1,AMV2}, we consider Hamiltonians of the form $H+V$, being $V$ a singular potential located at the boundary of a bounded smooth domain. These type of couplings are usually referred as {\em shell interactions} for $H$. The particular case of the  sphere was studied in \cite{Dittrich}, while in \cite{AMV1,AMV2} we considered boundaries of general bounded smooth domains. Due to the singularity of the potentials under study, a first issue to be treated is the self-adjoint character of the operator, something that we dealt with in  \cite{AMV1}. Our approach fits within the abstract one developed in \cite{Posi1,Posi2}, although we were interested in some concrete potentials that allowed us to obtain more specific results.

This article is addressed to the particular case of electrostatic shell potentials. Let $\Omega\subset\R^3$ be a bounded smooth domain, let $\sigma$ and $N$ be the surface measure and outward unit normal vector field on ${\partial\Omega}$, respectively. For convenience, we also set $\Omega_+=\Omega$ and $\Omega_-=\R^3\setminus\overline{\Omega}$, so ${\partial\Omega}=\partial\Omega_\pm$. Given $\lambda\in\R$ and $\varphi:\R^3\to\C^4$, the electrostatic shell potential $V_\lambda$ applied to $\varphi$ is formally defined as
$$V_\lambda(\varphi)=\frac{\lambda}{2}(\varphi_++\varphi_-)\sigma,$$
where $\varphi_\pm$ denote the boundary values of $\varphi$ (whenever they exist in a reasonable sense) when one approaches to ${\partial\Omega}$ from $\Omega_\pm$. Therefore, $V_\lambda$ maps functions defined in $\R^3$ to vector measures of the form $f\sigma$ with $f:{\partial\Omega}\to\C^4$. In particular, one can interpret $V_\lambda$ as the distribution $\lambda\delta_{\partial\Omega}$ when acting on functions which have a well-defined trace on ${\partial\Omega}$, where $\delta_{{\partial\Omega}}$ denotes the Dirac-delta distribution on ${\partial\Omega}$.

Our interest is focused on the study of the existence of stable energy states in $(-m,m)$ for $H+V_\lambda$, where $m>0$ is interpreted as the mass of the particle whose evolution is modeled by the coupling $\partial_t+i(H+V_\lambda)$. More precisely, we look for a description of the set of $\lambda$'s in $\R$ for which there exist $a\in(-m,m)$  and a nontrivial spinor $\varphi$ in $L^2(\R^3)^4$ (actually, in the domain of definition of $H+V_\lambda$) such that
\begin{equation}\label{1.1}
(H + V_\lambda)(\varphi)=a\varphi.
\end{equation}
In \cite{AMV2} we found that this is not possible if $|\lambda|$ is either too big or too small. More precisely, we showed that there exist upper and lower thresholds $\lambda_u({\partial\Omega})$ and $\lambda_l({\partial\Omega})$, respectively,  with $0<\lambda_l({\partial\Omega})\leq2\leq\lambda_u({\partial\Omega})$ and such that if $|\lambda|\not\in [\lambda_l({\partial\Omega}),\lambda_u({\partial\Omega})]$ then there exists no nontrivial $\varphi$ verifying (\ref{1.1}) for any $a\in(-m,m)$.

The main purpose of this paper is to determine how small can $[\lambda_l({\partial\Omega}),\lambda_u({\partial\Omega})]$ be under some constraint on the size of ${\partial\Omega}$ and/or $\Omega$. 
In Section \ref{ss Newton} we show that a natural condition is to consider
$$\frac{\text{Area}({\partial\Omega})}{\text{Cap}(\Omega)}
=\operatorname{constant},$$
where ${\text{Cap}( \Omega)}$ stands for the Newtonian capacity of $\Omega$ (see Section \ref{ss Newton} for the details). In particular, our main result in this direction is the following theorem (see also Remark \ref{rr2}). The symbol ``$\Ker$'' in the statement of the theorem denotes the {\em kernel}, referring to (\ref{1.1}).
\begin{theorem}\label{intro thm 1}
Let $\Omega\subset\R^3$ be a bounded domain with smooth boundary and assume that
\begin{equation}\label{constr q}
m\,\frac{\Area(\partial\Omega)}{\Capa(\Omega)}>\frac{1}{4\sqrt{2}}.
\end{equation}
Then
\begin{equation*}
\begin{split}
&\sup\{|\lambda|:\,\Ker(H+V_{\lambda}-a)\neq0\text{ for some }a\in(-m,m)\}\\
&\qquad\qquad\qquad\qquad\qquad\geq
4\Bigg(m\,\frac{\Area(\partial\Omega)}{\Capa(\Omega)}+\sqrt{m^2\left(\frac{\Area(\partial\Omega)}{\Capa(\Omega)}\right)^2
+\frac{1}{4}}\Bigg),\\
&\inf\{|\lambda|:\,\Ker(H+V_{\lambda}-a)\neq0\text{ for some }a\in(-m,m)\}\\
&\qquad\qquad\qquad\qquad\qquad\leq
4\Bigg(-m\,\frac{\Area(\partial\Omega)}{\Capa(\Omega)}+\sqrt{m^2\left(\frac{\Area(\partial\Omega)}{\Capa(\Omega)}\right)^2
+\frac{1}{4}}\Bigg).
\end{split}
\end{equation*}
In both cases, the equality holds if and only if $\Omega$ is a ball.
\end{theorem}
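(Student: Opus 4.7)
The plan is to combine the Birman--Schwinger principle from \cite{AMV2} with a sharp Rayleigh-quotient analysis of the boundary operator at the threshold energies $a=\pm m$, where the Newtonian capacity appears through a classical isoperimetric-type inequality.

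By the Birman--Schwinger characterization of \cite{AMV2}, for $a\in(-m,m)$ the condition $\Ker(H+V_\lambda-a)\neq 0$ is equivalent to the existence of a nonzero $g\in L^2({\partial\Omega})^4$ solving $(\tfrac{1}{\lambda}+\CC^a_\sigma)g=0$, where $\CC^a_\sigma$ is the bounded self-adjoint boundary operator obtained as the symmetric trace of convolution with the fundamental solution of $H-a$. Thus the admissible $\lambda$'s form $\{-1/\mu:\mu\in\text{spec}_p(\CC^a_\sigma),\,a\in(-m,m)\}$. Differentiating gives $\partial_a\CC^a_\sigma$ as the boundary trace of $(H-a)^{-2}\succeq 0$, so by Feynman--Hellmann the eigenvalue branches of $\CC^a_\sigma$ are monotone in $a$, and the extreme admissible $|\lambda|$'s are controlled by the limits $a\to\pm m$. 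At $a=m$ the fundamental solution reduces to $\phi^m(x)=(4\pi|x|)^{-1}\bigl(m(I_4+\beta)+i\alpha\cdot x/|x|^2\bigr)$, and with respect to the splitting of $L^2({\partial\Omega})^4$ into $\pm 1$-eigenspaces of $\beta$, the Newton-kernel piece $m(I_4+\beta)$ acts as $2m$ times the scalar Newton convolution on the $+$-part and vanishes on the $-$-part, while the singular $i\alpha\cdot x/|x|^2$-term is purely off-diagonal (since $\{\alpha_j,\beta\}=0$) and carries the $\tfrac{i}{2}(\alpha\cdot N)$-jump of the Dirac Cauchy operator.

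Restricting the Rayleigh quotient of $\CC^m_\sigma$ to the two-dimensional subspace of spinors which are \emph{constant} in each $\beta$-eigenspace produces a concrete $2\times 2$ symmetric matrix whose diagonal entry on the $+$-side is
$$2\tilde q:=\frac{2m}{\Area({\partial\Omega})}\iint_{{\partial\Omega}\times{\partial\Omega}}\frac{d\sigma(x)d\sigma(y)}{4\pi|x-y|},$$
whose $-$-side diagonal entry is $0$, and whose off-diagonal entry $b$ satisfies $|b|^2=\tfrac14$ (a direct consequence of $(\alpha\cdot N)^2=I$ once one integrates $|\alpha\cdot N|^2$ over ${\partial\Omega}$). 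The two eigenvalues of this matrix are $\tilde q\pm\sqrt{\tilde q^2+\tfrac14}$. By the dual characterization of Newtonian capacity,
$$\iint_{{\partial\Omega}\times{\partial\Omega}}\frac{d\sigma(x)d\sigma(y)}{4\pi|x-y|}\geq\frac{\Area({\partial\Omega})^2}{\Capa(\Omega)},$$
with equality if and only if $\sigma/\Area({\partial\Omega})$ is the equilibrium probability measure of $\overline\Omega$, equivalently if and only if $\Omega$ is a ball. Hence $\tilde q\geq q:=m\,\Area({\partial\Omega})/\Capa(\Omega)$, and inverting $\lambda=-1/\mu$ together with the monotonicity of $\sqrt{\tilde q^2+\tfrac14}\pm\tilde q$ in $\tilde q$ turns this into $\sup|\lambda|\geq 4(q+\sqrt{q^2+\tfrac14})$ and $\inf|\lambda|\leq 4(\sqrt{q^2+\tfrac14}-q)$, with equality in both characterizing the ball; the symmetric analysis at $a=-m$ takes care of the opposite-sign branch.

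The decisive step is the reduction above, and in particular the two technical ingredients that must be verified: first, that the off-diagonal entry of the $2\times 2$ matrix has exactly $|b|^2=\tfrac14$, which requires correctly tracking the interplay between the principal-value boundary Riesz integral and the $\tfrac{i}{2}(\alpha\cdot N)$-jump of the Dirac Cauchy operator; second, that constants in the chosen $\beta$-eigenspaces genuinely extremize the Rayleigh quotient of $\CC^m_\sigma$, so that the rigidity in the capacity inequality transfers to rigidity for the full Dirac operator. For the ball both points are confirmed by the explicit computation of \cite{Dittrich}, and in the general case the second point requires an additional variational argument. The hypothesis $q>1/(4\sqrt{2})$ is the quantitative threshold ensuring that this endpoint eigenvalue analysis really places the extremal admissible $|\lambda|$'s outside the universal band dictated by $\lambda_l({\partial\Omega})\leq 2\leq\lambda_u({\partial\Omega})$ of \cite{AMV2}.
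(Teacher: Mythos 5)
Your opening steps (Birman--Schwinger, the Feynman--Hellmann monotonicity of the eigenvalues of $C^a_\sigma$ via the fundamental solution of $(H-a)^2$, and the reduction to the endpoint operators $C^{\pm m}_\sigma$) coincide with the paper's, and your use of $\iint\frac{d\sigma(x)\,d\sigma(y)}{4\pi|x-y|}\geq\Area(\partial\Omega)^2/\Capa(\Omega)$, with rigidity via Reichel's theorem, is indeed the right final ingredient. The gap is in the middle step, and it is fatal as written. First, the admissible couplings are $\lambda=-1/\mu$ with $\mu$ an eigenvalue of $C^m_\sigma$, so a \emph{large} admissible $|\lambda|$ corresponds to an eigenvalue $\mu$ \emph{close to zero} (on the appropriate side), not to the top or bottom of the spectrum. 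A Rayleigh quotient restricted to a trial subspace only bounds the extremes of the spectrum (min--max); it cannot certify that $C^m_\sigma$ has spectrum inside a prescribed small interval $[-(\sqrt{q^2+1/4}-q),\,0)$ near the origin, which is exactly what the first inequality of the theorem requires. The ``additional variational argument'' you defer is therefore not a verification but the whole problem. Second, the concrete computation you rely on is false: the off-diagonal entry of your $2\times2$ compression is, up to constant spinors, $\frac{1}{\Area(\partial\Omega)}\iint w(x-y)\,d\sigma(x)\,d\sigma(y)$ with $w(z)=i\,\upsigma\cdot z/(4\pi|z|^3)$, and since $w(-z)=-w(z)$ this double integral vanishes by antisymmetry; it is not of modulus $1/2$, and the identity $(\alpha\cdot N)^2=I_4$ has no bearing on it. Hence your compressed matrix has eigenvalues $2\tilde q$ and $0$, not $\tilde q\pm\sqrt{\tilde q^2+1/4}$, and gives no information about the admissible $\lambda$'s.

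What the paper does instead is operator-algebraic rather than test-function based: writing $C^m_\sigma$ in terms of $K$ and $W$ and using $\{(\upsigma\cdot N)K,(\upsigma\cdot N)W\}=0$ and $((\upsigma\cdot N)W)^2=-1/4$ (this identity, not any matrix element on constants, is where the $1/4$ under the square root comes from), one shows that $\Ker(1/\lambda+C^m_\sigma)\neq0$ if and only if $\Ker\big(1-(8m/\lambda)K-(16/\lambda^2)W^2\big)\neq0$, and introduces the optimal constant $\lambda_\Omega$ in the quadratic form inequality $(4/\lambda)^2\int|W(f)|^2\,d\sigma+(8m/\lambda)\int K(f)\cdot\overline{f}\,d\sigma\leq\int|f|^2\,d\sigma$. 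One then proves $\lambda_\Omega\geq 4\big(m\|K\|_\sigma+\sqrt{m^2\|K\|_\sigma^2+1/4}\big)$, that every admissible $\lambda>0$ at $a=m$ satisfies $\lambda\leq\lambda_\Omega$, and --- crucially, using compactness of $W^2-1/4$ together with the hypothesis, which guarantees $\lambda_\Omega>2\sqrt2$ --- that the inequality is attained at $\lambda=\lambda_\Omega$ and its minimizers produce elements of $\Ker(1/\lambda_\Omega+C^m_\sigma)$; only then does one get $\sup\{|\lambda|\}=\lambda_\Omega$ and $\inf\{|\lambda|\}=4/\lambda_\Omega$ (via the $\lambda\mapsto-4/\lambda$ symmetry), after which $\|K\|_\sigma\geq\Area(\partial\Omega)/\Capa(\Omega)$ and, for the ball, $\|W\|_\sigma=1/2$ (the rigidity from HMMPT) yield the stated bounds and equality cases. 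Note also that the hypothesis $m\,\Area(\partial\Omega)/\Capa(\Omega)>1/(4\sqrt2)$ serves to guarantee this attainment ($\lambda_\Omega>2\sqrt2$), not to push the extremal $\lambda$'s outside the band $[\lambda_l,\lambda_u]$. To salvage your outline you would have to replace the compression step by an argument that genuinely locates eigenvalues of $C^m_\sigma$ near $0$, e.g.\ the equivalence and quadratic-form attainment analysis above; constants are in general not extremizers, so the rigidity cannot be transferred the way you propose.
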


The first step to prove this result is to use the connection made in  \cite{AMV2} between \eqref{1.1} and the existence of  a nontrivial eigenvalue $c(a)$ of $C^a_\sigma$, a Cauchy type operator defined on ${\partial\Omega}$ in the principal value sense, and whose precise definition we postpone to Section \ref{ss funda}. This connection  corresponds to the so-called Birman-Schwinger principle (see \cite{Seba}) adapted to our setting (see Proposition \ref{spec p1}).

The second step is to show that $c(a)$ is a monotone function of $a\in(-m,m)$. This has important consequences because it reduces the problem to the study of the limiting cases $a=\pm m.$ Using the well-known properties of the Cauchy operator stated in Lemma \ref{l jump} below, it is sufficient to consider just the case $a= m.$ This latter problem is equivalent to find $\lambda\in\R$ and
$u,h\in L^2(\sigma)^2$ with $u,v\neq0$ such that
 \begin{equation*}
\begin{split}
\quad\left\{
\begin{array}{rr}  2mK(u)+W(h)=-u/\lambda,  \\  W(u)=-h/\lambda,  \end{array}\right.
\end{split}
\end{equation*}
where $K$ is an operator on $\partial\Omega$ defined by the convolution with the Newtonian kernel $k(x)=(4\pi|x|)^{-1}$ (a positive and compact operator), and $W$ is a ``Clifford algebra'' version of the 2-dimensional Riesz transform on ${\partial\Omega}$ whose precise definition we postpone to Section \ref{s quadratic}.

At this point two results become crucial. On one hand, we use that $2W$ is an isometry when ${\partial\Omega}$ is a sphere. This is indeed something specific of the sphere, in \cite{HMMPT} the authors prove that the spheres are the only boundaries of bounded domains for which $2W$ is an isometry (under some extra assumptions). On the other hand, to deal with $K$, we use the fact proved in \cite{Reichel1,Reichel2} which says that if the Newtonian capacity $\text{Cap}(\Omega)$ is attained on the normalized surface measure of ${\partial\Omega}$ and $\Omega$ is regular enough, then ${\partial\Omega}$ is a sphere. By a simple argument, we relate $K$ and $\text{Cap}(\Omega)$. In order to use these two ingredients, we first prove that to solve our optimization problem is equivalent to minimize, in terms of $\Omega$, the infimum over all $\lambda>0$ such that
\begin{equation}\label{quadratic 1}
\bigg(\frac{4}{\lambda}\bigg)^2\!\!\int_{\partial\Omega}|W(f)|^2\,d\sigma
+\frac{8m}{\lambda}\int_{\partial\Omega} K(f)\cdot\overline{f}\,d\sigma
\leq\int_{\partial\Omega}|f|^2\,d\sigma
\end{equation}
for all $f\in L^2(\sigma)^2$. It is to this infimum $\lambda$ to which we prove an isoperimetric-type inequality like the first one in Theorem \ref{intro thm 1} (see Lemma \ref{iso l3}). The constraint (\ref{constr q}) appears as a technical obstruction on the arguments that we use to connect the infimum $\lambda$ of the quadratic form inequality to the admissible $\lambda$'s that generate eigenvalues as in (\ref{1.1}) (see Theorem \ref{quad t1}$(iv)$ and Corollary \ref{quad c1}, see also Remark \ref{constraint remark} for a related result). 
We should mention that the free Dirac operator $H$ is not bounded neither above nor below, so that characterizing eigenvalues by minimizing some appropriately chosen quadratic form is not straightforward as can be seen in [4]. Since $W$ is self-adjoint, (\ref{quadratic 1}) can be read as
\begin{equation*}
\int_{\partial\Omega}
\bigg(\bigg(\frac{4W}{\lambda}\bigg)^2
+\frac{8mK}{\lambda}\bigg)(f)\cdot\overline{f}\,d\sigma
\leq\int_{\partial\Omega}|f|^2\,d\sigma.
\end{equation*}

The paper is organized as follows. In Section \ref{s preliminaries} we state the preliminaries, where we introduce some notation and recall some properties of the resolvent of $H$, as well as the construction of $H+V_\lambda$. Section \ref{s monotony} is devoted to the Birman-Schwinger principle and the above-mentioned monotone character of the eigenvalues of $C^a_\sigma$. The relation with the limiting case $a=m$ for the optimization problem and the optimal constant of the quadratic form inequality (\ref{quadratic 1}) is explored in Section \ref{s quadratic}. 
Finally, Section \ref{ss isoper} is about the isoperimetric-type result and contains the proof of Theorem \ref{intro thm 1} in Section \ref{ss Newton}. Previously, some other natural constraint conditions not including Newtonian capacity are discarded.

We want to thank P. Exner for enlightening conversations.

\section{Preliminaries}\label{s preliminaries}
This article continues the study developed in \cite{AMV1,AMV2}, so we assume that the reader is familiar with the notation, methods and results in there. However, in this section we recall some basic rudiments for the sake of completeness.

Given a positive Borel measure $\nu$ in $\R^3$, set
$$L^2(\nu)^4=\left\{f:\R^3\to\C^4\text{ $\nu$-measurable}:\,
\int|f|^2\,d\nu<\infty\right\},$$
and denote by $\langle\cdot,\cdot\rangle_{\nu}$ and $\|\cdot\|_\nu$ the standard scalar product and norm in $L^2(\nu)^4$, i.e.,
$\langle f,g\rangle_\nu=\int f\cdot\overline g\,d\nu$ and
$\|f\|^2_\nu=\int|f|^2\,d\nu$ for $f,g\in L^2(\nu)^4$.  We write $I_4$ or $1$ interchangeably to denote the identity operator on $L^2(\nu)^4$.  We say that $\nu$ is $d$-dimensional if there exists $C>0$ such that $\nu(B(x,r))\leq Cr^d$ for all $x\in\R^3$, $r>0$.

We denote by $\mu$ the Lebesgue measure in $\R^3$. Concerning $\partial\Omega$, note that $\sigma$ is 2-dimensional. Since we are not interested in optimal regularity assumptions, for the sequel we assume that ${\partial\Omega}$ is of class $\CC^\infty$.
Finally, we introduce the auxiliary space of locally finite measures $$\XX=\left\{G\mu+g\sigma:\,G\in L^2(\mu)^4,\, g\in L^2(\sigma)^4\right\}.$$

\subsection{A fundamental solution of $H-a$}\label{ss funda}
Observe that $H$, which is symmetric and initially defined in $\CC^\infty_c(\R^3)^4$ ($\C^4$-valued functions in $\R^3$ which are $\CC^\infty$ and with compact support), can be extended by duality to the space of distributions with respect to the test space $\CC^\infty_c(\R^3)^4$ and, in particular, it can be defined on $\XX$.
The following lemma (see \cite[Lemma 2.1]{AMV2}) is concerned with the resolvent of $H$, which will be very useful for the results below.
\begin{lemma}\label{resolvent}
Given $a\in\R$, a fundamental solution of $H-a$ is given by
$$\phi^a(x)=\frac{e^{-\sqrt{m^2-a^2}|x|}}{4\pi|x|}\left(a+m\beta
+\left(1+\sqrt{m^2-a^2}|x|\right)\,i\alpha\cdot\frac{x}{|x|^2}\right)\quad\text{for }x\in\R^3\setminus\{0\},$$
i.e., $(H-a)\phi^a=\delta_0 I_4$ in the sense of distributions, where $\delta_0$ denotes the Dirac measure centered at the origin. Furthermore, if $a\in(-m,m)$ then $\phi^a$ satisfies
\begin{itemize}
\item[$(i)$] $\phi^a_{j,k}\in\CC^\infty(\R^3\setminus\{0\})$ for all $1\leq j,k\leq 4$,
\item[$(ii)$] $\phi^a(x-y)=\overline{(\phi^a)^t}(y-x)$ for all $x,y\in\R^3$ such that $x\neq y$,
\item[$(iii)$] there exist $\gamma,\delta>0$ such that
\begin{itemize}
\item[$(a)$] $\sup_{1\leq j,k\leq 4}|\phi^a_{j,k}(x)|\leq C|x|^{-2}$ for all $|x|<\delta$,
\item[$(b)$] $\sup_{1\leq j,k\leq 4}|\phi^a_{j,k}(x)|\leq Ce^{-\gamma|x|}$ for all $|x|>1/\delta$,
\item[$(c)$] $\sup_{1\leq j,k\leq 4}\,
\sup_{\xi\in\R^3}(1+|\xi|^2)^{1/2}|\FF(\phi^a_{j,k})(\xi)|<\infty,$ where $\FF$ denotes the Fourier transform in $\R^3$.
\end{itemize}
\end{itemize}
\end{lemma}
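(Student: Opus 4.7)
The plan is to build $\phi^a$ from the Yukawa (modified Helmholtz) fundamental solution by exploiting the factorization
\[
(H-a)(H+a)=-\Delta+(m^2-a^2)I_4,
\]
which follows from the Clifford relations $\alpha_j\alpha_k+\alpha_k\alpha_j=2\delta_{jk}I_4$, $\alpha_j\beta+\beta\alpha_j=0$, $\beta^2=I_4$. For $a\in(-m,m)$, set $k:=\sqrt{m^2-a^2}>0$; the standard Yukawa kernel $G_k(x)=e^{-k|x|}/(4\pi|x|)$ satisfies $(-\Delta+k^2)G_k=\delta_0$ in $\R^3$. I would then define $\phi^a:=(H+a)(G_k I_4)$ as a matrix of tempered distributions and show it coincides with the closed-form expression in the statement.

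The verification of the formula is a direct computation off the origin: writing $\nabla G_k(x)=-\bigl(1+k|x|\bigr)\,e^{-k|x|}\,x/(4\pi|x|^3)$, one obtains
\[
-i\alpha\cdot\nabla G_k(x)
=\frac{e^{-k|x|}}{4\pi|x|}\bigl(1+k|x|\bigr)\,i\alpha\cdot\frac{x}{|x|^2},
\]
and adding $(a+m\beta)G_k(x)I_4$ gives the stated $\phi^a$. The distributional identity $(H-a)\phi^a=\delta_0 I_4$ then follows directly from the factorization above and $(-\Delta+k^2)G_k=\delta_0$, without having to compute derivatives of $G_k$ at the origin.

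For item (i), smoothness off the origin is immediate from that of $G_k$. For (ii), I would use that $\beta^\ast=\beta$ and $\alpha_j^\ast=\alpha_j$, together with $a\in\R$: replacing $x$ by $-x$ in the bracket flips only the sign of the $i\alpha\cdot x/|x|^2$ term, and taking conjugate transpose flips it back (because $(i\alpha\cdot x/|x|^2)^\ast=-i\alpha\cdot x/|x|^2$ while the Hermitian part $a+m\beta$ is unchanged), and the scalar prefactor $e^{-k|x|}/(4\pi|x|)$ depends only on $|x|$. For (iii)(a)-(b), the estimates are read off the explicit formula: near $0$, the $i\alpha\cdot x/|x|^2$ summand divided by $4\pi|x|$ gives the $|x|^{-2}$ bound, while for large $|x|$ the factor $e^{-k|x|}(1+k|x|)/|x|$ provides exponential decay with any $\gamma<k$. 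For (iii)(c), I would avoid estimating $\FF(\phi^a)$ componentwise by hand and instead use $\FF(G_k)(\xi)=(|\xi|^2+k^2)^{-1}$, which yields
\[
\FF(\phi^a)(\xi)=\frac{\alpha\cdot\xi+m\beta+a}{|\xi|^2+k^2};
\]
since the numerator has operator norm bounded by $|\xi|+m+|a|$, each entry is $O(1/(1+|\xi|))$, giving (c).

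I do not anticipate a serious obstacle; the only mildly delicate points are keeping the $\C^{4\times 4}$ bookkeeping straight in (ii) (in particular the interaction between the parity of the $\alpha$-term and complex conjugation on the factor $i$) and choosing $\gamma$ in (b) strictly less than $k$ so that the polynomial factor $1+k|x|$ is absorbed for $|x|>1/\delta$ upon choosing $\delta$ small enough. The route through the factorization $(H-a)(H+a)=-\Delta+(m^2-a^2)$ is what makes the proof essentially a bookkeeping exercise rather than a distributional calculation.
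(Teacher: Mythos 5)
Your proposal is correct and takes essentially the same route the paper relies on: the lemma is imported from [AMV2, Lemma 2.1], where (as this paper itself recalls in the proof of Lemma \ref{mono l1}) $\phi^a$ is by construction $(H+a)$ applied to the Yukawa kernel $e^{-\sqrt{m^2-a^2}|x|}/(4\pi|x|)$, using the factorization $(H-a)(H+a)=-\Delta+m^2-a^2$, exactly as you do. The only minor remark is that the fundamental-solution identity is claimed for every $a\in\R$, so for $|a|\ge m$ you should note that the same computation goes through with the convention $\sqrt{m^2-a^2}=i\sqrt{a^2-m^2}$ (your restriction $k>0$ is only needed for items $(i)$--$(iii)$, which are anyway stated for $a\in(-m,m)$).
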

In the lemma above we denoted the complex conjugate of the transpose of $\phi^a$ by $\overline{(\phi^a)^t}$, that is,
$$((\phi^a)^t)_{\,j,k}=\phi^a_{k,j}\quad\text{and}\quad (\overline{\phi^a})_{j,k}=\overline{\phi^a_{j,k}}\quad\text{for all }1\leq j,k\leq 4.$$
Note that the assumption $a\in(-m,m)$ in Lemma \ref{resolvent} is only relevant for the validity of properties $(ii)$, $(iii)(b)$ and $(iii)(c)$.

Given a positive Borel measure $\nu$ in $\R^3$, $f\in L^2(\nu)^4$, and $x\in\R^3$, we set $$(\phi^a*f\nu)(x)=\int\phi^a(x-y)f(y)\,d\nu(y),$$ whenever the integral makes sense. By Lemma \ref{resolvent} and \cite[Lemma 2.1]{AMV1},
if $a\in(-m,m)$ and $\nu$ is a $d$-dimensional measure in $\R^3$ with $1<d\leq 3$, then there exists $C>0$ such that
\begin{equation}\label{---1}
\|\phi^a*g\nu\|_{\mu}\leq C\|g\|_{\nu}\quad\text{for all $g\in L^2(\nu)^4$}.
\end{equation}

The next lemma (see \cite[Lemma 2.2]{AMV2}), will be used in the sequel.
\begin{lemma}\label{l jump}
Given $g\in L^2(\sigma)^4$ and $x\in{\partial\Omega}$, set
\begin{equation*}
\begin{split}
C_\sigma^a (g)(x)=\lim_{\epsilon\searrow0}\int_{|x-z|>\epsilon}\phi^a(x-z)g(z)\,d\sigma(z)
\quad\text{and}\quad
C_{\pm}^a(g)(x)=\lim_{\Omega_{\pm}\ni y\stackrel{nt}{\longrightarrow} x}(\phi^a*g\sigma)(y),
\end{split}
\end{equation*}
where $\textstyle{\Omega_{\pm}\ni y\stackrel{nt}{\longrightarrow} x}$ means that $y\in\Omega_{\pm}$ tends to $x\in{\partial\Omega}$ non-tangentially. Then $C_\sigma^a$ and $C_\pm^a$ are bounded linear operators in $L^2(\sigma)^4$. Moreover, the following  holds:
\begin{itemize}
\item[$(i)$] $C_\pm^a =\mp\frac{i}{2}\,(\alpha\cdot N)+C_\sigma^a$ (Plemelj--Sokhotski jump formulae),
\item[$(ii)$] for any $a\in[-m,m]$, $C_\sigma^a$ is self-adjoint and $-4(C_\sigma^a(\alpha\cdot N))^2=I_4$.
\end{itemize}
\end{lemma}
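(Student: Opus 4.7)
The plan is to handle the four assertions in order: $L^2(\sigma)$-boundedness of $C_\sigma^a$ and $C_\pm^a$, the Plemelj--Sokhotski jump formula (i), self-adjointness of $C_\sigma^a$, and the square identity $-4(C_\sigma^a(\alpha\cdot N))^2=I_4$.

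For boundedness and (i) I would isolate the principal singular part of $\phi^a$ from Lemma \ref{resolvent}. The only non-locally-integrable contribution near the origin is the odd, homogeneous, Calder\'on--Zygmund-type kernel $\tfrac{i}{4\pi}\,\alpha\cdot x/|x|^3$ (equivalently, $-i\alpha\cdot\nabla$ applied to the Newtonian potential), while the remaining terms are $O(|x|^{-1})$ at zero and (for $a\in(-m,m)$) exponentially small at infinity. The principal-value operator attached to the leading kernel is $L^2(\sigma)$-bounded on a $\CC^\infty$ surface by the Coifman--McIntosh--Meyer / David theorem, and the $O(|x|^{-1})$ tail is $L^2(\sigma)$-bounded by Schur's test against the $2$-dimensional measure $\sigma$. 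The jump formula (i) then reduces to the classical Plemelj--Sokhotski computation for the CZ piece: the jump of $\nabla$ applied to the single layer contributes $\pm\tfrac{1}{2}N$ to the non-tangential traces, so $-i\alpha\cdot\nabla$ produces the jump $\mp\tfrac{i}{2}(\alpha\cdot N)$, while the smoother remainder is continuous across $\partial\Omega$.

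Self-adjointness of $C_\sigma^a$ is immediate from Lemma \ref{resolvent}(ii). Substituting $\phi^a(x-z)=\overline{(\phi^a)^t}(z-x)$ in $\langle C_\sigma^a g,h\rangle_\sigma$ and exchanging the order of integration on the truncations $\{|x-z|>\varepsilon\}$ before letting $\varepsilon\searrow 0$ yields $\langle g,C_\sigma^a h\rangle_\sigma$. For the square identity, write $M$ for the multiplication operator by $\alpha\cdot N$ and set $T=C_\sigma^a M$. Using $(\alpha\cdot N)^2=I_4$ and (i) one finds $C_\pm^a M=T\mp\tfrac{i}{2}I_4$, so $(C_+^a M)(C_-^a M)=T^2+\tfrac{1}{4}I_4$, and it suffices to show this product vanishes. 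Fix $g\in L^2(\sigma)^4$, put $h=Mg$, $u=\phi^a*h\sigma$, and $w=u\,\chi_{\Omega_-}$. A distributional computation gives $(H-a)w=-i(\alpha\cdot N)u_-\sigma$: the bulk term vanishes since $(H-a)u$ is supported on $\partial\Omega$ and $u$ is classical in each $\Omega_\pm$, and the only jump contribution across $\partial\Omega$ comes from $\nabla\chi_{\Omega_-}=N\sigma$. Since the symbol $(\alpha\cdot\xi+m\beta)^2=(|\xi|^2+m^2)I_4$ never takes the value $a^2$ for $|a|\leq m$ and $\xi\neq 0$, the operator $H-a$ is injective on $L^2(\mu)^4$, so inverting with $\phi^a$ yields $w=-i\,\phi^a*((\alpha\cdot N)u_-)\sigma$. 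Taking the non-tangential limit from $\Omega_+$, where $w\equiv 0$, forces $0=-iC_+^a(Mu_-)=-iC_+^a MC_-^a M\,g$, and the arbitrariness of $g$ finishes the argument.

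The main obstacle I expect is the careful bookkeeping in the last step: correctly identifying the boundary contribution in $(H-a)w$ and justifying the inversion of $H-a$ on the decaying piece $w$. The remaining ingredients---CZ boundedness, Schur estimates on the smoother tail, the classical Plemelj--Sokhotski jump, and the symmetry calculation for self-adjointness---are essentially mechanical given Lemma \ref{resolvent}.
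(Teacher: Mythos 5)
Your overall route is the same one behind the paper's citation: this lemma is not proved here but imported from \cite[Lemma 2.2]{AMV2}, and there (and in \cite{AMV1}) the proof runs exactly as you outline --- split off the Calder\'on--Zygmund kernel $\tfrac{i}{4\pi}\,\alpha\cdot x/|x|^3$ from the weakly singular remainder (the same $\omega_1+\omega_2+\omega_3$ splitting this paper reuses in the proof of Lemma \ref{mono l1}), get $L^2(\sigma)$-boundedness and the single-layer jump relations from classical layer-potential theory, get self-adjointness from the Hermitian symmetry of $\phi^a$, and get the algebraic identity from a Cauchy-type reproducing formula. Your algebra is correct: with $M$ denoting multiplication by $\alpha\cdot N$ and $T=C^a_\sigma M$, indeed $C^a_\pm M=T\mp\tfrac{i}{2}$, the distributional identity $(H-a)(u\chi_{\Omega_-})=-i(\alpha\cdot N)u_-\sigma$ holds (it is justified rigorously by (\ref{div formula}), the same computation the paper performs in (\ref{elec eq11})), and $C^a_+MC^a_-M=0$ gives $T^2=-\tfrac14$, which is the claim in $(ii)$.

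The one step that fails as written is the inversion: ``$H-a$ is injective on $L^2(\mu)^4$, so $w=-i\,\phi^a*((\alpha\cdot N)u_-)\sigma$''. That argument needs both $w=u\chi_{\Omega_-}$ and the candidate convolution to lie in $L^2(\mu)^4$. For $a\in(-m,m)$ this is fine (exponential decay of $\phi^a$ together with (\ref{---1})), but part $(ii)$ is asserted for all $a\in[-m,m]$, and the endpoints are precisely the cases this paper needs (Theorem \ref{quad t1} works with $C^{\pm m}_\sigma$). At $a=\pm m$ the kernel decays only like $|x|^{-1}$, so $u=O(1/|x|)$ at infinity and $w$ is in general not square integrable, so $L^2$-injectivity is inapplicable. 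The fix is standard: the difference of the two candidate solutions is a tempered distribution annihilated by $H-a$, hence has Fourier transform supported in $\{\xi:|\xi|^2+m^2=a^2\}=\{0\}$, so it is a polynomial, and since both functions tend to zero at infinity the polynomial vanishes; alternatively, derive the reproducing formula directly from (\ref{div formula}) on $\Omega_-\cap B_R$ and let $R\to\infty$, using the decay of $\phi^a(y-\cdot)$ and $u$. Two smaller points of the same kind: Lemma \ref{resolvent}$(ii)$ is stated only for $a\in(-m,m)$, but the symmetry $\phi^a(x-y)=\overline{(\phi^a)^t}(y-x)$ holds at $a=\pm m$ by inspection of the explicit formula, so your self-adjointness argument does extend (you should also justify the exchange of truncation limits, e.g.\ via bounds for the maximal truncated operator, which is routine on a $\CC^\infty$ surface).
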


\subsection{On the divergence theorem for $H-a$}
A simple computation involving the divergence theorem shows that
\begin{equation*}
\begin{split}
\int_{\Omega_\pm}\left((\alpha\cdot\nabla)\varphi\cdot\overline{\psi}
+\varphi\cdot\overline{(\alpha\cdot\nabla)\psi}\right)d\mu
=\pm\int_{{\partial\Omega}}(\alpha\cdot N)\varphi\cdot\overline{\psi}\,d\sigma
\end{split}
\end{equation*}
for all $\varphi,\psi\in W^{1,2}(\chi_{\Omega_\pm}\mu)^4$, where $W^{1,2}(\chi_{\Omega_\pm}\mu)^4$ denotes the Sobolev space of $\C^4$-valued functions defined on $\Omega_\pm$ such that all its components have all their derivatives up to first order in $L^2(\chi_{\Omega_\pm}\mu)$. As a consequence, we easily deduce that
\begin{equation}\label{div formula}
\begin{split}
\int_{\Omega_\pm}\left((H-a)\varphi\cdot\overline{\psi}
-\varphi\cdot\overline{(H-a)\psi}\right)d\mu
=\mp i\int_{{\partial\Omega}}(\alpha\cdot N)\varphi\cdot\overline{\psi}\,d\sigma.
\end{split}
\end{equation}

\subsection{The construction of $H+V$ and its domain of definition}
In what follows we use a nonstandard notation, $\Phi^a$, to define the convolution of measures in $\XX$ with the fundamental solution of $H-a$, $\phi^a$. Capital letters, as $F$ or $G$, in the argument of $\Phi^a$ denote elements of $L^2(\mu)^4$, and the lowercase letters, as $f$ or $g$, denote elements in  $L^2(\sigma)^4$. Despite that this notation is nonstandard, it is very convenient in order to shorten the forthcoming computations.

Given $G\mu+g\sigma\in\XX$, we define $$\Phi^a(G+g)=
\phi^a*G\mu+\phi^a*g\sigma.$$
Then (\ref{---1}) shows that
$\|\Phi^a(G+g)\|_\mu\leq C(\|G\|_\mu+\|g\|_\sigma)$ for some constant $C>0$ and all $G\mu+g\sigma\in\XX$, so $\Phi^a(G+g)\in L^2(\mu)^4$. Moreover, following \cite[Section 2.3]{AMV1} one can show that
$(H-a)(\Phi^a(G+g))=G\mu+g\sigma$ in the sense of distributions. This allows us to define a ``generic'' potential $V$ acting on functions $\varphi=\Phi^a(G+g)$ by
\begin{equation*}
V(\varphi)= -g\sigma,
\end{equation*}
so that $(H-a+V)(\varphi)=G\mu$
in the sense of distributions. For simplicity of notation, we write
$(H-a+V)(\varphi)=G\in L^2(\mu)^4$.

In order to construct a domain of definition where $H+V$ is self-adjoint, in \cite{AMV1} we used the trace operator on ${\partial\Omega}$.
For $G\in\CC_c^\infty(\R^3)^4$, one defines the trace operator on ${\partial\Omega}$ by $\Tr_{\partial\Omega}(G)=G\chi_{{\partial\Omega}}$. Then, $\Tr_{\partial\Omega}$ extends to a bounded linear operator $$\Tr_\sigma:W^{1,2}(\mu)^4\to L^2(\sigma)^4$$ (see \cite[Proposition 2.6]{AMV1}, for example).  From Lemma \ref{resolvent}$(iii)(c)$, we have
$$\|\Phi^a(G)\|_{W^{1,2}(\mu)^4}\leq C\|G\|_{\mu}$$ for some $C>0$ and all $G\in L^2(\mu)^4$ (see \cite[Lemma 2.8]{AMV1}), thus we can define $$\Phi_\sigma^a(G)=\Tr_\sigma(\Phi^a(G))=\Tr_\sigma(\phi^a*G\mu)$$ and it satisfies
$\|\Phi_\sigma^a(G)\|_\sigma\leq C\|G\|_\mu$ for all $G\in L^2(\mu)^4$.
In accordance with the notation introduced in \cite{AMV1}, for the case $a=0$, we write $\Phi$, $\Phi_\sigma$, $C_\pm$ and $C_\sigma$ instead of $\Phi^0$, $\Phi_\sigma^0$, $C_\pm^0$ and $C_\sigma^0$, respectively.

Finally, we recall our main tool to construct domains where $H+V$ is self-adjoint, namely \cite[Theorem 2.11]{AMV1}. Actually, the following theorem, which corresponds to \cite[Theorem 2.3]{AMV2}, is a direct application of \cite[Theorem 2.11]{AMV1} to $H+V$, and we state it here in order to make the exposition more self-contained. Given an operator between vector spaces $S:X\to Y$, denote
$$\Ker(S)=\{x\in X:\, S(x)=0\}\quad\text{and}\quad
\Ran(S)=\{S(x)\in Y:\, x\in X\}.$$

\begin{theorem}\label{pre t1}
Let $\Lambda:L^2(\sigma)^4\to L^2(\sigma)^4$ be a bounded operator. Set $$D(T)=\{\Phi(G+g): G\mu+g\sigma\in\XX,\,\Phi_\sigma(G)=\Lambda(g)\}\subset L^2(\mu)^4
\text{ and $T=H+V$ on $D(T)$,}$$ where $V(\varphi)=-g\sigma$ and $(H+V)(\varphi)=G$ for all $\varphi=\Phi(G+g)\in D(T)$. If $\Lambda$ is self-adjoint and $\Ran(\Lambda)$ is closed, then $T:D(T)\to L^2(\mu)^4$ is an essentially self-adjoint operator. In that case, if $\{{\Phi(h)}:\,h\in\Ker(\Lambda)\}$ is closed in $L^2(\mu)^4$, then $T$ is self-adjoint.
\end{theorem}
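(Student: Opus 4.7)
The theorem advertises itself as a direct application of the abstract criterion [AMV1, Theorem 2.11], so my plan is to verify its hypotheses in the concrete gauge $a=0$ with the data $(H,\Phi,\Phi_\sigma,V)$ of the present paper. The structural input required by that criterion---boundedness of $\Phi:\XX\to L^2(\mu)^4$ from (\ref{---1}), boundedness of $\Phi_\sigma:L^2(\mu)^4\to L^2(\sigma)^4$, and the distributional identity $H\,\Phi(G+g)=G\mu+g\sigma$---is already collected in Section \ref{ss funda}. Moreover, for $\varphi=\Phi(G+g)$ the decomposition $\varphi=\phi*G\mu+\phi*g\sigma$ shows $\varphi|_{\Omega_\pm}\in W^{1,2}(\chi_{\Omega_\pm}\mu)^4$, which legitimizes the use of (\ref{div formula}) on either side.

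The first analytic step is to prove that $T$ is symmetric on $D(T)$. For $\varphi_j=\Phi(G_j+g_j)\in D(T)$ with $T\varphi_j=G_j$, adding the divergence identity (\ref{div formula}) (with $a=0$) in $\Omega_+$ and in $\Omega_-$ yields
$$\langle T\varphi_1,\varphi_2\rangle_\mu-\langle\varphi_1,T\varphi_2\rangle_\mu=-i\int_{\partial\Omega}(\alpha\cdot N)\bigl[\Tr_+\varphi_1\cdot\overline{\Tr_+\varphi_2}-\Tr_-\varphi_1\cdot\overline{\Tr_-\varphi_2}\bigr]d\sigma.$$
Substituting the Plemelj--Sokhotski formulae $\Tr_\pm\varphi_j=\mp\tfrac{i}{2}(\alpha\cdot N)g_j+C_\sigma(g_j)+\Phi_\sigma(G_j)$ from Lemma \ref{l jump}(i) and the domain constraint $\Phi_\sigma(G_j)=\Lambda(g_j)$, the right-hand side collapses to a bilinear form in $(g_1,g_2)$ which vanishes thanks to the self-adjointness of $\Lambda$, of $C_\sigma$, and of $(\alpha\cdot N)$ together with the algebraic identity $-4(C_\sigma(\alpha\cdot N))^2=I_4$ of Lemma \ref{l jump}(ii).

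The second step is essential self-adjointness, i.e.\ density of $\Ran(T\pm i)$ in $L^2(\mu)^4$. Given $F\in L^2(\mu)^4$ and writing $\varphi=\Phi(G+g)$, the equation $(T\pm i)\varphi=F$ reads $G\pm i\,\Phi(G+g)=F$ subject to $\Phi_\sigma(G)=\Lambda(g)$; eliminating $G$ yields a single equation in $g\in L^2(\sigma)^4$ that must be solvable modulo $\Ker(\Lambda)$. The self-adjointness and closed-range hypotheses on $\Lambda$ furnish the orthogonal splitting $L^2(\sigma)^4=\Ker(\Lambda)\oplus\Ran(\Lambda)$ and the bounded invertibility of the restriction $\Lambda|_{\Ran(\Lambda)}\to\Ran(\Lambda)$, which is exactly what is needed. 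To upgrade essential self-adjointness to self-adjointness it then suffices that $D(T)$ already be closed in the graph norm, and since the ``boundary freedom'' of $D(T)$ is parametrized by $\Ker(\Lambda)$, this closure is controlled by the closedness of $\{\Phi(h):h\in\Ker(\Lambda)\}$ in $L^2(\mu)^4$---precisely the supplementary assumption.

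The main obstacle is notational rather than conceptual: one must keep consistent track of the sign conventions in the jump formulae (the normal $N$ is outward for $\Omega_+$ and hence inward for $\Omega_-$), of the gauge choice $a=0$, and of the subtle distinction between the distributional identity $H\varphi=G\mu+g\sigma$ and its pointwise meaning $H\varphi=G$ inside $\Omega_\pm$. Once these are aligned with the abstract framework of [AMV1, Theorem 2.11], the full analytic content---the symmetry computation of Step 2 and the functional-analytic splitting of Step 3---is encoded in that reference and the present statement follows as a specialization.
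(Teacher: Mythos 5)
The paper gives no proof of this statement at all: it is quoted from \cite{AMV2} (Theorem 2.3 there) and explicitly presented as a direct application of the abstract self-adjointness criterion of \cite{AMV1} (Theorem 2.11), which is precisely the reduction your proposal makes, so you are following essentially the same route as the paper. Your supplementary sketches (symmetry via the divergence identity plus the Plemelj--Sokhotski formulae and the self-adjointness of $\Lambda$ and $C_\sigma$, and the closed-range splitting of $\Lambda$ for the closure and deficiency arguments) are consistent with how that cited theorem is established, and, as in the paper, the substantive analytic work is delegated to that reference.
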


In particular, if $\Lambda$ is self-adjoint and Fredholm, then the operator $T$ given by Theorem \ref{pre t1} is self-adjoint.

\subsection{Electrostatic shell potentials}
In \cite[Theorem 3.8]{AMV1} we proved that,
if $\lambda\in\R\setminus\{0,\pm2\}$ and $T$ is the operator defined by
\begin{equation*}
\begin{split}
D(T)=\big\{\Phi&(G+g): G\mu+g\sigma\in\XX,\,\Phi_\sigma(G)=\Lambda(g)\big\}\\
&\text{and}\quad T=H+V_\lambda \text{ on } D(T),
\end{split}
\end{equation*}
where
\begin{equation}\label{elec lambda}
\begin{split}
\Lambda=-(1/\lambda+C_\sigma),\qquad V_\lambda(\varphi)=\frac{\lambda}{2}(\varphi_++\varphi_-)\sigma
\end{split}
\end{equation}
and
$\varphi_\pm=\Phi_\sigma(G)+C_\pm (g)$ for $\varphi=\Phi(G+g)\in D(T)$, then $T:D(T)\subset L^2(\mu)^4\to L^2(\mu)^4$ is self-adjoint. Moreover, we also showed that $V_\lambda=V$ on $D(T)$ for all $\lambda\neq0$, so the self-adjointness was a consequence of Theorem \ref{pre t1}.
Let us mention that if one replaces $\Phi_\sigma(G)=\Lambda(g)$ by
$\lambda\Phi_\sigma(G)=\lambda\Lambda(g)$ in the definition of $D(T)$ above, one recovers the well-known fact that $D(H+V_0)=D(H)=W^{1,2}(\mu)^4$ when $\lambda=0$.

\section{Birman-Schwinger principle and monotonicity}\label{s monotony}
We will make use of the following proposition, which corresponds to \cite[Proposition 3.1]{AMV2} and can be understood as the classical Birman-Schwinger principle adapted to our setting.
\begin{proposition}\label{spec p1}
Let $T$ be as in {\em Theorem \ref{pre t1}}. Given $a\in(-m,m)$, there exists $\varphi=\Phi(G+g)\in D(T)$ such that $T(\varphi)=a\varphi$ if and only if $\Lambda(g)=(C_\sigma^a-C_\sigma)(g)$ and $G=a\Phi^a(g)$.
Therefore, $\Ker(T-a)\neq0$ if and only if $\Ker(\Lambda+C_\sigma-C_\sigma^a)\neq0$.
\end{proposition}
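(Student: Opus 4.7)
The plan is to exploit two representations of the candidate eigenfunction $\varphi$: the defining one $\varphi = \Phi(G+g)$, built from a fundamental solution of $H$, and the natural one $\varphi = \Phi^a(g)$, built from a fundamental solution of $H-a$. Interior matching will identify $G$, and non-tangential boundary matching on $\partial\Omega$ will identify the constraint on $\Lambda$. This is the classical Birman--Schwinger rewriting adapted to the distributional shell-potential framework.

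For the forward direction I would start with $\varphi = \Phi(G+g) \in D(T)$ satisfying $T\varphi = a\varphi$. Unravelling the construction of $T$, this gives $V(\varphi) = -g\sigma$ and $G = a\varphi$ as elements of $L^2(\mu)^4$, hence $H\varphi = G\mu + g\sigma$ distributionally and $(H-a)\varphi = g\sigma$. Since $a \in (-m,m)$ lies in the resolvent set of $H$ on $L^2(\mu)^4$, the unique $L^2$ solution of this equation is $\varphi = \Phi^a(g)$, forcing $G = a\Phi^a(g)$. To extract the condition on $\Lambda$, I would compare the two representations on the boundary: from $\varphi = \Phi(G+g)$ one has $\varphi_\pm = \Phi_\sigma(G) + C_\pm(g)$ by the construction in Section \ref{s preliminaries}, while from $\varphi = \Phi^a(g)$ and Lemma \ref{l jump} one has $\varphi_\pm = C_\pm^a(g)$. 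The jump pieces $\mp\tfrac{i}{2}\alpha\cdot N(g)$ are $a$-independent and cancel, leaving $\Phi_\sigma(G) = (C_\sigma^a - C_\sigma)(g)$, which together with the membership condition $\Phi_\sigma(G) = \Lambda(g)$ of $D(T)$ produces the identity $\Lambda(g) = (C_\sigma^a - C_\sigma)(g)$.

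For the converse I would take $g$ with $\Lambda(g) = (C_\sigma^a - C_\sigma)(g)$, set $G := a\Phi^a(g)$ (which lies in $L^2(\mu)^4$ by \eqref{---1}) and $\varphi := \Phi(G+g)$. Using $(H-a)\phi^a = \delta_0 I_4$ I would derive $H\Phi^a(g) = a\Phi^a(g)\mu + g\sigma$, and convolving with the fundamental solution $\phi$ (valid since $0$ is in the resolvent set of $H$ for $m>0$) gives $\Phi^a(g) = \Phi(G+g) = \varphi$. Reversing the boundary-trace computation yields $\Phi_\sigma(G) = \Lambda(g)$, so $\varphi \in D(T)$, and then $T\varphi = G = a\varphi$ is immediate from the definitions. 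The main technical point is the trace-matching step: one has to check carefully that the ``Sobolev trace plus Cauchy limit'' splitting $\varphi_\pm = \Phi_\sigma(G) + C_\pm(g)$ is genuinely consistent with the pure Cauchy representation $\varphi_\pm = C_\pm^a(g)$ coming from Lemma \ref{l jump}, which relies on the framework of \cite{AMV1} and on the $a$-independence of the jump operator $\mp\tfrac{i}{2}\alpha\cdot N$.
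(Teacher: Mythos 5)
Your argument is correct and is essentially the standard Birman--Schwinger rewriting that the paper relies on (it cites \cite[Proposition 3.1]{AMV2} rather than reproving it): from $T\varphi=a\varphi$ you get $(H-a)\varphi=g\sigma$, identify $\varphi=\Phi^a(g)$ and hence $G=a\Phi^a(g)$ using that $a$ (resp.\ $0$) lies in the resolvent set of the free Dirac operator, and then match non-tangential boundary values via the jump formulae of Lemma \ref{l jump} to turn the domain condition $\Phi_\sigma(G)=\Lambda(g)$ into $\Lambda(g)=(C_\sigma^a-C_\sigma)(g)$, with the converse obtained by reversing these steps. The only point worth making explicit for the final ``Therefore'' is that $\varphi\neq0$ corresponds to $g\neq0$, which follows from the same jump formulae since $\Phi^a(g)=0$ off $\partial\Omega$ forces $C_\pm^a(g)=0$ and hence $g=0$.
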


The following lemma contains the monotonicity property mentioned in the introduction.
\begin{lemma}\label{mono l1}
Given $a\in[-m,m]$, the eigenvalues of $C_\sigma^a$ form a finite or countable sequence $\emptyset\neq\{c_j(a)\}_j\subset\R$, being $1/4$ the only possible accumulation point of $\{c_j(a)^2\}_j$.
Moreover, $\frac{d}{da}\,c_j(a)>0$ for all $a\in(-m,m)$ and all $j$.

As a consequence, given $a\in(-m,m)$, the set of real $\lambda$'s such
that $\Ker(H+V_{\lambda}-a)\neq0$ form a finite or countable sequence $\emptyset\neq\{\lambda_j(a)\}_j\subset\R$, being $4$ the only possible accumulation point of $\{\lambda_j(a)^2\}_j$. Furthermore, $\lambda_j(a)$ is a strictly monotonous increasing function of $a\in(-m,m)$ for all $j$.
\end{lemma}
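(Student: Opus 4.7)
The plan is to first establish the spectral structure of the self-adjoint operator $C_\sigma^a$ on $L^2(\sigma)^4$, then derive the monotonicity via a Hellmann--Feynman-type computation, and finally translate the conclusions to $\lambda_j(a)$ through Proposition \ref{spec p1}. By Lemma \ref{l jump}$(ii)$ the operator $C_\sigma^a$ is bounded self-adjoint, so its eigenvalues are real. Inspecting the explicit formula in Lemma \ref{resolvent}, the leading $i\alpha\cdot x/(4\pi|x|^3)$ term of $\phi^a(x)$ is independent of $a$, so $\phi^a(x)-\phi^0(x)$ is $O(|x|^{-1})$ as $|x|\to 0$ and decays exponentially at infinity. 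Hence the difference defines a Hilbert--Schmidt (therefore compact) operator $C_\sigma^a-C_\sigma$ on $L^2(\sigma)^4$, and Weyl's theorem yields $\sigma_{\mathrm{ess}}(C_\sigma^a)=\sigma_{\mathrm{ess}}(C_\sigma)\subset\{-1/2,1/2\}$, the last inclusion being the spectral description of Dirac-type Cauchy singular integrals on smooth surfaces used in \cite{AMV1,AMV2}. Thus each $c_j(a)\notin\{\pm 1/2\}$ is an isolated eigenvalue of finite multiplicity, and the only possible accumulation of $\{c_j(a)^2\}_j$ is $1/4$. Non-emptiness of the point spectrum, and the fact that $0\notin\{c_j(a)\}$, follow from the operator identity $C_\sigma^a(\alpha\cdot N)C_\sigma^a=-(\alpha\cdot N)/4$ (immediate from Lemma \ref{l jump}$(ii)$ after multiplying by $\alpha\cdot N$): if $C_\sigma^a\psi=0$ then $(\alpha\cdot N)\psi=-4C_\sigma^a(\alpha\cdot N)C_\sigma^a\psi=0$, which forces $\psi=0$ since $\alpha\cdot N$ is invertible.

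Next I would compute $\tfrac{d}{da}C_\sigma^a$ and show it is strictly positive. Real-analyticity of $\sqrt{m^2-a^2}$ on $(-m,m)$ makes $\{C_\sigma^a\}$ a real-analytic self-adjoint family. Differentiating the kernel-level resolvent identity $\phi^a(x-y)-\phi^b(x-y)=(a-b)\int_{\R^3}\phi^a(x-z)\phi^b(z-y)\,d\mu(z)$ at $b=a$ gives
\[
\frac{d}{da}C_\sigma^a(g)(x)=\int_{\R^3}\phi^a(x-z)\,\Phi^a(g)(z)\,d\mu(z)\qquad\text{for $x\in\partial\Omega$.}
\]
Pairing with $g\in L^2(\sigma)^4$, applying Fubini, and using the pointwise symmetry $\overline{\phi^a(x-z)^t}=\phi^a(z-x)$ from Lemma \ref{resolvent}$(ii)$ to rewrite the inner $\partial\Omega$-integral as $\overline{\Phi^a(g)(z)}$, one obtains
\[
\left\langle g,\tfrac{d}{da}C_\sigma^a(g)\right\rangle_\sigma=\int_{\R^3}|\Phi^a(g)(z)|^2\,d\mu(z)=\|\Phi^a(g)\|_\mu^2\geq 0.
\]
Equality would force $\Phi^a(g)=0$ in $L^2(\mu)^4$, whence $g\sigma=(H-a)\Phi^a(g)=0$ as a distribution and $g=0$. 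Hence $\tfrac{d}{da}C_\sigma^a$ is strictly positive on $L^2(\sigma)^4$.

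Kato--Rellich analytic perturbation theory now produces locally analytic eigenvalue branches $c_j(a)$ with analytic normalized eigenvectors $g_j(a)$, and the Feynman--Hellmann formula combined with the previous step yields $c_j'(a)=\|\Phi^a(g_j(a))\|_\mu^2/\|g_j(a)\|_\sigma^2>0$. Strict positivity prevents any two branches from crossing, so monotonicity persists under any consistent global labeling. For the consequence on $\lambda_j(a)$, Proposition \ref{spec p1} with $\Lambda=-(1/\lambda+C_\sigma)$ from \eqref{elec lambda} gives $\Ker(H+V_\lambda-a)\neq 0$ if and only if $\Ker(C_\sigma^a+1/\lambda)\neq 0$, that is, $-1/\lambda\in\{c_j(a)\}$. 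Since $0\notin\{c_j(a)\}$, inversion yields $\lambda_j(a)=-1/c_j(a)$, whence the accumulation of $\{\lambda_j(a)^2\}_j$ at $4$ and the strict inequality $\lambda_j'(a)=c_j'(a)/c_j(a)^2>0$ both follow.

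The main obstacle I foresee is the rigorous justification of differentiating the principal-value operator $C_\sigma^a$ under the integral sign and of the Fubini exchange in the pairing: one has to verify that the differentiated kernel $\phi^a\ast\phi^a$ is sufficiently tame near the diagonal (in fact strictly less singular than $\phi^a$, which is why the a priori principal value becomes unnecessary after differentiation) and to control the double integral over $\R^3\times\partial\Omega$. The explicit bounds in Lemma \ref{resolvent}$(iii)$ together with the $L^2$-mapping estimate \eqref{---1} are exactly what is needed for this. A secondary subtlety is the identification of $\sigma_{\mathrm{ess}}(C_\sigma)$ as a subset of $\{\pm 1/2\}$, which I would import from the companion paper \cite{AMV2} rather than reprove.
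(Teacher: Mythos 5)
Your core argument is sound and arrives at the same key identity as the paper, namely $\langle(\partial_a C_\sigma^a)(g),g\rangle_\sigma=\|\Phi^a(g)\|_\mu^2$, but by a genuinely different route: you differentiate the first resolvent identity at the level of kernels and then use Fubini together with the symmetry $\phi^a(x-y)=\overline{(\phi^a)^t}(y-x)$ of Lemma \ref{resolvent}$(ii)$, whereas the paper computes $\partial_a\phi^a$ explicitly, identifies it as a fundamental solution of $(H-a)^2$, and then derives the identity from the divergence formula (\ref{div formula}) combined with the Plemelj--Sokhotski jump relations. Your route has the advantage that the differentiated kernel is only weakly singular, so the principal value issue disappears automatically; the price is the justification of the kernel-level resolvent identity and of the Fubini exchange, for which the bounds in Lemma \ref{resolvent}$(iii)$ and (\ref{---1}) do suffice, as you anticipate. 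Your strict-positivity step, deducing $g=0$ from $\Phi^a(g)=0$ via the distributional identity $(H-a)\Phi^a(g)=g\sigma$, is a valid alternative to the paper's argument through $C^a_\pm(g)=0$ and the jump formula. The passage to $\lambda_j(a)=-1/c_j(a)$ through Proposition \ref{spec p1} and $\Lambda=-(1/\lambda+C_\sigma)$ coincides with the paper.

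Three points need repair. First, your claim that \emph{non-emptiness} of the point spectrum follows from the identity $C_\sigma^a(\alpha\cdot N)C_\sigma^a=-\tfrac{1}{4}(\alpha\cdot N)$ is not correct: that identity shows $C_\sigma^a$ is injective (indeed invertible), hence $0$ is never an eigenvalue, but it produces no eigenvalue at all; a self-adjoint operator with essential spectrum contained in $\{\pm1/2\}$ could a priori have empty point spectrum. Non-emptiness of $\{c_j(a)\}_j$ must be imported (the paper takes it from \cite[Remark 3.5]{AMV2}) or proved separately, e.g.\ from the fact that $(C_\sigma^a)^2-\tfrac14$ is compact and nonzero. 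Second, the difference $C_\sigma^a-C_\sigma$ has kernel of size $|x-y|^{-1}$ near the diagonal, which on a two-dimensional surface is \emph{not} Hilbert--Schmidt (the square of the kernel is not integrable against $\sigma\times\sigma$); it is nevertheless compact as a weakly singular integral operator, so your Weyl-theorem argument survives once the mechanism is corrected --- this is exactly the compactness the paper invokes for the kernels $\omega_1,\omega_2$. Third, the assertion that strict positivity of the derivatives ``prevents any two branches from crossing'' is false as stated (two strictly increasing analytic functions can certainly cross); fortunately nothing in the lemma needs non-crossing: what is used later is only that each analytic branch $c_j(a)$, hence each $\lambda_j(a)$, is strictly increasing, which your Feynman--Hellmann computation does give.
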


\begin{proof}
For any $a\in[-m,m]$, the existence of the sequence $\emptyset\neq\{c_j(a)\}_j\subset\R$ stated in the lemma and its possible accumulation point are guaranteed by \cite[Remark 3.5]{AMV2} (which also holds for $a=\pm m$) and the self-adjointness of $C_\sigma^a$.

Given $a\in[-m,m]$ and $c_j(a)$, let $g_j(a)\in L^2(\sigma)^4$
be such that $\|g_j(a)\|_\sigma=1$ and
\begin{equation}\label{elec eq1}
C_\sigma^a(g_j(a))=c_j(a)g_j(a).
\end{equation}
To differentiate $c_j(a)$ with respect of $a$, we take the scalar product of (\ref{elec eq1}) with $g_j(a)$, so
\begin{equation*}
c_j(a)=\langle c_j(a)g_j(a),g_j(a)\rangle_{\sigma}
=\langle C^a_\sigma (g_j(a)),g_j(a)\rangle_{\sigma}.
\end{equation*}
We abreviate $\partial_a\equiv \frac{d}{da}$. Then, at a formal level,
\begin{equation}\label{elec eq2}
\begin{split}
\partial_ac_j(a)
&=\langle\partial_a\big(C^a_\sigma (g_j(a))\big),g_j(a)\rangle_{\sigma}
+\langle C^a_\sigma (g_j(a)),\partial_ag_j(a)\rangle_{\sigma}\\
&=\langle(\partial_aC^a_\sigma)(g_j(a)),g_j(a)\rangle_{\sigma}
+\langle C^a_\sigma (\partial_ag_j(a)),g_j(a)\rangle_{\sigma}
+\langle C^a_\sigma (g_j(a)),\partial_ag_j(a)\rangle_{\sigma}\\
&=\langle(\partial_aC^a_\sigma)(g_j(a)),g_j(a)\rangle_{\sigma}
+2\real\langle \partial_ag_j(a),C^a_\sigma(g_j(a))\rangle_{\sigma},
\end{split}
\end{equation}
where we used in the last equality above that $C^a_\sigma$ is self-adjoint. Recall that $\|g_j(a)\|_\sigma=1$ for all $a\in(-m,m)$, thus (\ref{elec eq1}) gives
\begin{equation*}
\begin{split}
0&=c_j(a)\partial_a\langle g_j(a),g_j(a)\rangle_{\sigma}
=\langle \partial_a g_j(a),c_j(a)g_j(a)\rangle_{\sigma}
+\langle c_j(a)g_j(a),\partial_a g_j(a)\rangle_{\sigma}\\
&=2\real\langle \partial_a g_j(a),C^a_\sigma(g_j(a))\rangle_{\sigma},
\end{split}
\end{equation*}
which plugged into (\ref{elec eq2}) yields
\begin{equation}\label{elec eq3}
\begin{split}
\partial_ac_j(a)=\langle(\partial_aC^a_\sigma)(g_j(a)),g_j(a)\rangle_{\sigma}.
\end{split}
\end{equation}

To justify the above computations, in particular in what respects to the issue of the principal value in the definition of $C_\sigma^a$, one can decompose the kernel
\begin{equation*}
\begin{split}
\phi^a(x)&=\frac{e^{-\sqrt{m^2-a^2}|x|}}{4\pi|x|}\left(a+m\beta
+i\sqrt{m^2-a^2}\,\alpha\cdot\frac{x}{|x|}\right)+\frac{e^{-\sqrt{m^2-a^2}|x|}-1}{4\pi}\,i\left(\alpha\cdot\frac{x}{|x|^3}\right)\\
&\quad+\frac{i}{4\pi}\,\left(\alpha\cdot\frac{x}{|x|^3}\right)
=:\omega_1(x)+\omega_2(x)+\omega_3(x)
\end{split}
\end{equation*}
and note that the principal value only concerns $\omega_3$, since the kernels $\omega_1$ and $\omega_2$ are absolutely integrable on ${\partial\Omega}$ and actually define compact operators, but $\omega_3$ does not depend on $a$. At this point, standard arguments in perturbation theory (by compact operators which depend continuously on the perturbation parameter) allow us to justify the formal computations carried out above concerning $\partial_a$.

Our aim now is to understand the operator $\partial_aC^a_\sigma$. One may guess that, since $C^a_\sigma$ is defined as the convolution operator on ${\partial\Omega}$ with the fundamental solution of $H-a$,  and formally
$\partial_a((H-a)^{-1})=(H-a)^{-2}$, then $\partial_aC^a_\sigma$ should be defined as the convolution operator on ${\partial\Omega}$ with the fundamental solution of $(H-a)^2$. This is indeed the case. In the following lines, we are going to prove the details of this argument.
We can easily compute
\begin{equation}\label{elec eq4}
\begin{split}
\partial_a(\phi^a(x))=\frac{ae^{-\sqrt{m^2-a^2}|x|}}{4\pi\sqrt{m^2-a^2}}
\left(a+m\beta
+i\sqrt{m^2-a^2}\,\alpha\cdot\frac{x}{|x|}\right)+\frac{e^{-\sqrt{m^2-a^2}|x|}}{4\pi|x|}.
\end{split}
\end{equation}
Note that $$-i\alpha\cdot\nabla(e^{-\sqrt{m^2-a^2}|x|})=
i\sqrt{m^2-a^2}e^{-\sqrt{m^2-a^2}|x|}\,\alpha\cdot\frac{x}{|x|},$$ so (\ref{elec eq4}) gives
\begin{equation}\label{elec eq5}
\begin{split}
\partial_a(\phi^a(x))
=a\left(H+a\right)
\frac{e^{-\sqrt{m^2-a^2}|x|}}{4\pi\sqrt{m^2-a^2}}
+\frac{e^{-\sqrt{m^2-a^2}|x|}}{4\pi|x|}.
\end{split}
\end{equation}
A simple calculation shows that
\begin{equation}\label{elec eq7}
(-\Delta+m^2-a^2)\frac{e^{-\sqrt{m^2-a^2}|x|}}{8\pi\sqrt{m^2-a^2}}
=\frac{e^{-\sqrt{m^2-a^2}|x|}}{4\pi|x|}
\end{equation}
which, combined with (\ref{elec eq5}) and using that
$-\Delta+m^2-a^2
=(H-a)(H+a),$ yields
\begin{equation}\label{elec eq6}
\begin{split}
\partial_a(\phi^a(x))
&=\left(a(H+a)+\frac{1}{2}(-\Delta+m^2-a^2)\right)\frac{e^{-\sqrt{m^2-a^2}|x|}}{4\pi\sqrt{m^2-a^2}}\\
&=\left(a+\frac{1}{2}(H-a)\right)
(H+a)\,\frac{e^{-\sqrt{m^2-a^2}|x|}}
{4\pi\sqrt{m^2-a^2}}
=(H+a)^2\,\frac{e^{-\sqrt{m^2-a^2}|x|}}{8\pi\sqrt{m^2-a^2}}.
\end{split}
\end{equation}
Recall that $(4\pi|x|)^{-1}e^{-\sqrt{m^2-a^2}|x|}$ is a fundamental solution of $-\Delta+m^2-a^2$, that is
$$(-\Delta+m^2-a^2)\frac{e^{-\sqrt{m^2-a^2}|x|}}{4\pi|x|}=\delta_0$$
in the sense of distributions. In particular, from (\ref{elec eq7}) we get that
\begin{equation}\label{elec eq8}
(-\Delta+m^2-a^2)^2\,\frac{e^{-\sqrt{m^2-a^2}|x|}}{8\pi\sqrt{m^2-a^2}}
=\delta_0.
\end{equation}
Since $-\Delta+m^2-a^2$ commutes with
$H+a$, we easily see that
$$(H-a)^2(H+a)^2=
(-\Delta+m^2-a^2)^2,$$
and then, from (\ref{elec eq6}) and (\ref{elec eq8}), we finally deduce that
\begin{equation}\label{elec eq10}
\begin{split}
(H-a)^2\,\partial_a(\phi^a(x))=\delta_0,
\end{split}
\end{equation}
which means that $\partial_a(\phi^a(x))$ is a fundamental solution of
$(H-a)^2$, and $\partial_aC^a_\sigma$ corresponds to the operator of convolution on ${\partial\Omega}$ with this kernel, as suggested before (\ref{elec eq4}). Note that $\partial_a(\phi^a(x))=O(1/|x|)$ for $|x|\to0$, so in particular $\partial_aC^a_\sigma$ is compact in $L^2(\sigma)^4$.

Given $g\in L^2(\sigma)^4$, set
\begin{equation*}
\begin{split}
u(x)&=\int \partial_a(\phi^a(x-y))g(y)\,d\sigma(y)\qquad\text{for }x\in\R^3,
\end{split}
\end{equation*}
so $u=(\partial_aC^a_\sigma)(g)$ on ${\partial\Omega}$.
Using (\ref{elec eq6}), that $-\Delta+m^2-a^2$ and $H+a$ commute and (\ref{elec eq7}), we see that for any $x\in\R^3\setminus{\partial\Omega}$,
\begin{equation}\label{elec eq9}
\begin{split}
(H-a)u(x)
&=\int (H_x-a)
\partial_a(\phi^a(x-y))g(y)\,d\sigma(y)\\
&=\int (-\Delta_x+m^2-a^2)
(H_x+a)\,\frac{e^{-\sqrt{m^2-a^2}|x-y|}}{8\pi\sqrt{m^2-a^2}}\,g(y)\,d\sigma(y)\\
&=\int(H_x+a)\,\frac{e^{-\sqrt{m^2-a^2}|x-y|}}{4\pi|x-y|}\,g(y)\,d\sigma(y)=\Phi^a(g)(x),
\end{split}
\end{equation}
because $\phi^a(x)=(H+a)(4\pi|x|)^{-1}e^{-\sqrt{m^2-a^2}|x|}$ by construction. Concernig the notation employed, we mention that $\Delta_x$ and $H_x$ denote the Laplace and Dirac operators acting as a derivative on the $x$ variable. Since $\phi^a$ is a fundamental solution of $H-a$, we see from (\ref{elec eq9}) that $(H-a)^2u=0$ outside ${\partial\Omega}$,
a fact that we already knew in view of (\ref{elec eq10}) and the definition of $u$.

From Lemma \ref{l jump}$(i)$, we have
$g=i(\alpha\cdot N)(C^a_+(g)-C^a_-(g)).$
Therefore, using (\ref{div formula}), that $(H-a)\Phi^a(g)=0$ outside ${\partial\Omega}$ and (\ref{elec eq9}), we finally get
\begin{equation}\label{elec eq11}
\begin{split}
\langle(\partial_aC^a_\sigma)(g),g\rangle_{\sigma}
&=-i\int (\alpha\cdot N) u\cdot\overline{(C^a_+(g)-C^a_-(g))}\,d\sigma\\
&=\int_{\R^3\setminus{\partial\Omega}}\left((H-a)u\cdot\overline{\Phi^a(g)}
-u\cdot\overline{(H-a)\Phi^a(g)}\right)d\mu\\
&=\int_{\R^3\setminus{\partial\Omega}}|\Phi^a(g)|^2\,d\mu.
\end{split}
\end{equation}
Thanks to the Plemelj--Sokhotski jump formulae from Lemma \ref{l jump}$(i)$, we see that if $g\in L^2(\sigma)^4$ is such that $\Phi^a(g)=0$ in $\R^3\setminus{\partial\Omega}$ then $C_\pm^a(g)=0$, and thus $g=0$. Therefore, applying (\ref{elec eq11}) to $g_j(a)$ and plugging it into (\ref{elec eq3}) yields
\begin{equation*}
\begin{split}
\partial_ac_j(a)&=\langle(\partial_aC^a_\sigma)(g_j(a)),g_j(a)\rangle_{\sigma}=\int_{\R^3\setminus{\partial\Omega}}|\Phi^a(g_j(a))|^2\,d\mu>0,
\end{split}
\end{equation*}
because $g_j(a)$ is not identically zero (since $\|g_j(a)\|_\sigma=1$ by assumption).

To finish the proof of the lemma, it only remains to be shown the stated conclusions about $\{\lambda_j(a)\}_j$.
By Proposition \ref{spec p1} and the definition of $\Lambda$ in (\ref{elec lambda}), if $a\in(-m,m)$ then
$$\text{$\Ker(H+V_{\lambda}-a)\neq0$\quad if and only if\quad$\Ker(1/\lambda+C_\sigma^a)\neq0$,}$$ thus the existence of a sequence $\emptyset\neq\{\lambda_j(a)\}_j\subset\R$ such that $\Ker(H+V_{\lambda_j(a)}-a)\neq0$ and the fact stated in the lemma concerning its unique possible accumulation point follow from the first part of the lemma. Moreover, by setting $c_j(a)=-1/\lambda_j(a)$ we see that $\lambda_j(a)$ is a strictly monotonous increasing function of $a\in(-m,m)$ for all $j$.
\end{proof}

\begin{corollary}\label{elec c1}
Given $a\in(-m,m)$, we have
\begin{equation}\label{elec eq12}
\begin{split}
\sup\{\lambda<0:\,\Ker(H+V_{\lambda}-a)\neq0\}
&=-4/\sup\{\lambda>0:\,\Ker(H+V_{\lambda}-a)\neq0\},
\end{split}
\end{equation}
and the same holds replacing $\sup$ by $\inf$ on both sides of
$(\ref{elec eq12})$.
Set
$$\lambda_{\pm m}^s=\sup\{\lambda\in\R:\,\Ker(1/\lambda+C^{\pm m}_\sigma)\neq0\},\quad\lambda_{\pm m}^i
=\inf\{\lambda\in\R:\,\Ker(1/\lambda+C^{\pm m}_\sigma)\neq0\}.$$
Then $\lambda_{\pm m}^s>0>\lambda_{\pm m}^i$
and the following hold:
\begin{itemize}
\item[$(i)$] $\sup\{\lambda\in\R:\,\Ker(H+V_{\lambda}-a)\neq0\text{ for some }a\in(-m,m)\}=\lambda_{m}^s$,
\item[$(ii)$] $\inf\{\lambda\in\R:\,\Ker(H+V_{\lambda}-a)\neq0\text{ for some }a\in(-m,m)\}=\lambda_{-m}^i$,
\item[$(iii)$] $\sup\{|\lambda|:\,\Ker(H+V_{\lambda}-a)\neq0\text{ for some }a\in(-m,m)\}=\max\{\lambda_{m}^s,-\lambda_{-m}^i\}$,
\item[$(iv)$] $\inf\{|\lambda|:\,\Ker(H+V_{\lambda}-a)\neq0\text{ for some }a\in(-m,m)\}=4/\max\{\lambda_{m}^s,-\lambda_{-m}^i\}$.
\end{itemize}
\end{corollary}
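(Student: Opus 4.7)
The plan is to reduce everything to a spectral question for $C_\sigma^a$ via Proposition \ref{spec p1}, to extract an algebraic symmetry of that spectrum from Lemma \ref{l jump}$(ii)$, and then to combine this symmetry with the monotonicity of the eigenvalue branches from Lemma \ref{mono l1} to push the suprema and infima to the endpoints $a=\pm m$. Concretely, Proposition \ref{spec p1} applied with $\Lambda=-(1/\lambda+C_\sigma)$ says that $\Ker(H+V_\lambda-a)\neq 0$ if and only if $-1/\lambda$ is a (necessarily nonzero) eigenvalue of $C_\sigma^a$. Set $A=\alpha\cdot N$, so that $A^2=I_4$. From Lemma \ref{l jump}$(ii)$ one has $(C_\sigma^a A)^2=-I_4/4$, and conjugating by $A$ gives $(AC_\sigma^a)^2=A(C_\sigma^a A)^2A=-I_4/4$ as well. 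If $C_\sigma^a g=c g$ with $c\neq 0$, then $(AC_\sigma^a)^2g=-g/4$ expands to $c\cdot AC_\sigma^a(Ag)=-g/4$, and multiplying on the left by $A$ yields $C_\sigma^a(Ag)=-(4c)^{-1}Ag$. Hence the nonzero point spectrum of $C_\sigma^a$ is invariant under $c\mapsto -1/(4c)$, so the set $\Sigma_a:=\{\lambda\in\R:\Ker(H+V_\lambda-a)\neq 0\}$ (parametrized by $\lambda=-1/c$) is invariant under the involution $\lambda\mapsto -4/\lambda$. Since this involution is increasing on each component of $\R\setminus\{0\}$ and bijectively swaps $(0,\infty)$ with $(-\infty,0)$, applying it to the extremal positive (resp.\ negative) element of $\Sigma_a$ delivers (\ref{elec eq12}) for both sup and inf.

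For parts $(i)$ and $(ii)$, I would use the monotone branches $\lambda_j(a)=-1/c_j(a)$ from Lemma \ref{mono l1}: each $\lambda_j$ is strictly increasing and continuous on $(-m,m)$, so $\sup_{a\in(-m,m)}\lambda_j(a)=\lim_{a\to m^-}\lambda_j(a)=-1/c_j(m)$ and, analogously, $\inf_{a\in(-m,m)}\lambda_j(a)=-1/c_j(-m)$. Taking sup/inf over $j$ yields $\sup\Sigma=\lambda_m^s$ and $\inf\Sigma=\lambda_{-m}^i$, where $\Sigma:=\bigcup_{a\in(-m,m)}\Sigma_a$; this is $(i)$ and $(ii)$. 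Since Lemma \ref{l jump}$(ii)$ holds on all of $[-m,m]$, the $c\mapsto -1/(4c)$ symmetry also acts on the spectrum of $C_\sigma^{\pm m}$; combined with the fact that this spectrum is nonempty with $\pm 1/2$ as its only possible accumulation values (Lemma \ref{mono l1}), this forces both a positive and a negative element of $\Sigma_{\pm m}$, and hence $\lambda_{\pm m}^s>0>\lambda_{\pm m}^i$.

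Finally, $(iii)$ and $(iv)$ follow by combining $(i)$ and $(ii)$ with the invariance of the first paragraph applied to $\Sigma$ itself: each $\Sigma_a$ is invariant under $\lambda\mapsto -4/\lambda$, hence so is $\Sigma$. Thus $\sup\{|\lambda|:\lambda\in\Sigma\}=\max(\sup\Sigma,-\inf\Sigma)=\max(\lambda_m^s,-\lambda_{-m}^i)$, which is $(iii)$; and the pairing $|\lambda|\leftrightarrow 4/|\lambda|$ on $\Sigma$ gives $\inf\{|\lambda|:\lambda\in\Sigma\}=4/\sup\{|\lambda|:\lambda\in\Sigma\}$, which is $(iv)$. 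The only slightly delicate point in the whole argument is the continuity of the branches $c_j(a)$ up to the endpoints $a=\pm m$; this is handled by the norm-continuity of $C_\sigma^a$ in $a$ already exploited in the proof of Lemma \ref{mono l1} (the $a$-dependent part of the kernel $\phi^a$ consists of the absolutely integrable terms $\omega_1,\omega_2$, while the truly singular piece $\omega_3$ is independent of $a$).
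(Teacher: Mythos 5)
Your proposal is correct and follows essentially the same route as the paper: the Birman--Schwinger reduction via Proposition \ref{spec p1}, the $\lambda\mapsto-4/\lambda$ symmetry (which the paper imports from \cite[Theorem 3.3]{AMV2} as (\ref{elec eq14}) and notes is a consequence of Lemma \ref{l jump}$(ii)$, exactly the identity you verify directly), and the monotone branches of Lemma \ref{mono l1} pushed to the endpoints $a=\pm m$ for $(i)$--$(iv)$. Your explicit conjugation argument for the spectral symmetry and your phrasing of $(iv)$ via invariance of the union $\Sigma$ are only cosmetic variations on the paper's proof.
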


\begin{proof}
Given $a\in(-m,m)$, by \cite[Remark 3.5]{AMV2} we see that  $\{\lambda\in\R\setminus\{0\}:\,\Ker(H+V_{\lambda}-a)\neq0\}$ is a non empty set. Furthermore, in \cite[Theorem 3.3]{AMV2} we also proved that
\begin{equation}\label{elec eq13}
\begin{split}
\Ker(H+V_{\lambda}-a)\neq0\quad\text{if and only if}\quad
\Ker(H+V_{-4/\lambda}-a)\neq0.
\end{split}
\end{equation}
In particular, we see that
$\{\lambda<0:\,\Ker(H+V_{\lambda}-a)\neq0\}$ and
$\{\lambda>0:\,\Ker(H+V_{\lambda}-a)\neq0\}$ are non empty sets. Then, a simple argument using (\ref{elec eq13}) proves (\ref{elec eq12}).

Note that \cite[Remark 3.5]{AMV2} still aplies to the case $a=\pm m$, so
$\{\lambda\in\R:\,\Ker(1/\lambda+C^{\pm m}_\sigma)\neq0\}$ is a non empty set, and thus $\lambda_{\pm m}^s$ and $\lambda_{\pm m}^i$ are well defined. An inspection of the proof of \cite[Theorem 3.3]{AMV2}  shows that, for any $a\in[-m,m]$,
\begin{equation}\label{elec eq14}
\begin{split}
\Ker(1/\lambda+C^{a}_\sigma)\neq0\quad\text{if and only if}\quad
\Ker(-\lambda/4+C^{a}_\sigma)\neq0,
\end{split}
\end{equation}
which in fact is a consequence of Lemma \ref{l jump}$(ii)$ (note that (\ref{elec eq13}) follows by (\ref{elec eq14}) and Proposition \ref{spec p1}). A straightforward application of (\ref{elec eq14}) proves that
$\lambda_{\pm m}^s>0>\lambda_{\pm m}^i$. Furthermore, $(i)$ and $(ii)$ are a direct consequence of the monotonicity property proved in Lemma \ref{mono l1}, and $(iii)$ follows from $(i)$, $(ii)$ and the fact that $\lambda_{m}^s>0>\lambda_{-m}^i$. Regarding $(iv)$, note that $\inf\{|\lambda|:\,\Ker(H+V_{\lambda}-a)\neq0\text{ for some }a\in(-m,m)\}$ is the minimum between
$-\sup\{\lambda<0:\,\Ker(H+V_{\lambda}-a)\neq0\text{ for some }a\in(-m,m)\}$ and
$\inf\{\lambda>0:\,\Ker(H+V_{\lambda}-a)\neq0\text{ for some }a\in(-m,m)\}$, which by (\ref{elec eq12}) and Lemma \ref{mono l1} correspond to $4/\lambda_{m}^s$ and $-4/\lambda_{-m}^i$, respectively. This yields $(iv)$.
\end{proof}

\section{Quadratic forms}\label{s quadratic}
For $a\in\R$ and $\upsigma =(\sigma_1,\sigma_2,\sigma_3)$, where the $\sigma_j$'s compose the family of Pauli matrices introduced in (\ref{freedirac}), define the kernels
\begin{equation*}
k^a(x)=\frac{e^{-\sqrt{m^2-a^2}|x|}}{4\pi|x|}\,I_2
\quad\text{and}\quad
w^a(x)=\frac{e^{-\sqrt{m^2-a^2}|x|}}{4\pi|x|^3}
\left(1+\sqrt{m^2-a^2}|x|\right)\,i\,\upsigma \cdot x
\end{equation*}
for $x\in\R^3\setminus\{0\}$. Given $f\in L^2(\sigma)^2$ and $x\in {\partial\Omega}$, set
\begin{equation*}
K^a(f)(x)=\int k^a(x-z)f(z)\,d\sigma(z)
\quad\text{and}\quad
W^a(f)(x)=\lim_{\epsilon\searrow0}\int_{|x-z|>\epsilon} w^a(x-z)f(z)\,d\sigma(z).
\end{equation*}
That $K^a$ and $W^a$ are bounded operators in $L^2(\sigma)^2$ can be verified similarly to the case of $C^a_\sigma$ in $L^2(\sigma)^4$, we omit the details. Moreover, note that
\begin{equation}\label{sphere eq1}
C_\sigma^a
=\left(\begin{array}{cc}  (a+m)K^a& W^a\\
W^a & (a-m)K^a \end{array}\right).
\end{equation}

The results in the following lemma are contained in \cite[Section 4]{AMV2}.
\begin{lemma}\label{sphere l1}
For any $a\in[-m,m]$, $K^a$ is positive and self-adjoint, $W^a$ is also self-adjoint and the following hold:
\begin{itemize}
\item[$(i)$] the anticommutator $\left\{(\upsigma \cdot N)K^a,(\upsigma \cdot N)W^a\right\}$ vanishes identically,
\item[$(ii)$] $((\upsigma \cdot N)W^a)^2+(a^2-m^2)((\upsigma \cdot N)K^a)^2=-1/4$.
\end{itemize}
\end{lemma}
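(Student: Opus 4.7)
The plan is to extract all four assertions from the two ingredients already available in Lemma \ref{l jump}(ii), namely the self-adjointness of $C_\sigma^a$ and the relation $-4(C_\sigma^a(\alpha\cdot N))^2 = I_4$, combined with the block decomposition \eqref{sphere eq1}.

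For the self-adjointness claims I would match blocks on each side of $C_\sigma^a = (C_\sigma^a)^*$ using \eqref{sphere eq1}. The off-diagonal identification gives $(W^a)^* = W^a$, and the two diagonal identifications give both $(a+m)K^a$ and $(a-m)K^a$ self-adjoint. Their difference $2m K^a$ is therefore self-adjoint, and since $m > 0$ this forces $K^a$ self-adjoint. Positivity of $K^a$ then follows from the observation that the scalar part of $k^a$, namely $(4\pi|x|)^{-1} e^{-\sqrt{m^2-a^2}|x|}$, is the Green's function of $-\Delta + (m^2-a^2)$ on $\R^3$; its Fourier transform $(|\xi|^2 + m^2 - a^2)^{-1}$ is nonnegative on $[-m,m]$, so a standard Plancherel identity gives
\[
\langle K^a f, f\rangle_\sigma = \int_{\R^3} \frac{|\widehat{f\sigma}(\xi)|^2}{|\xi|^2 + m^2 - a^2}\, d\xi \geq 0.
\]

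For (i) and (ii), let $M$ denote pointwise multiplication by $\upsigma \cdot N$ on $L^2(\sigma)^2$, so the off-diagonal blocks of $\alpha \cdot N$ equal $M$ and $M^2 = I_2$ (as $|N|=1$). Direct block multiplication from \eqref{sphere eq1} gives
\[
C_\sigma^a(\alpha\cdot N) = \begin{pmatrix} W^a M & (a+m)K^a M \\ (a-m)K^a M & W^a M \end{pmatrix}.
\]
Squaring this and equating the result with $-\tfrac14 I_4$, the diagonal blocks deliver
\[
(W^a M)^2 + (a^2-m^2)(K^a M)^2 = -\tfrac14\, I_2,
\]
while each off-diagonal block equals $(a\pm m)$ times $W^a M K^a M + K^a M W^a M$. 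Since $m>0$, at least one of $a\pm m$ is nonzero for every $a\in[-m,m]$, so the bracket must vanish identically.

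The last step is to conjugate both identities on the left and right by $M$. Because $M^2 = I_2$ one has $M(W^a M)^2 M = (MW^a)^2$ and $M(K^a M)^2 M = (MK^a)^2$, which turns the diagonal identity into (ii); the analogous conjugation of the vanishing anticommutator yields (i). I expect the only real obstacle to be purely bookkeeping: keeping careful track of the block arithmetic and of the left-versus-right placement of $\upsigma \cdot N$. The structural point that makes everything align is that conjugation by $M$ is an involutive automorphism intertwining $W^a M \leftrightarrow M W^a$ and $K^a M \leftrightarrow M K^a$, thereby bridging the relations that naturally drop out of Lemma \ref{l jump}(ii) and the ones recorded in the statement.
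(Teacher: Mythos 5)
Your proposal is correct, and in fact it supplies a proof where the paper gives none: for this lemma the paper simply cites \cite[Section 4]{AMV2}, so the only in-paper ingredients are exactly the ones you use, namely Lemma \ref{l jump}$(ii)$ and the block decomposition (\ref{sphere eq1}). Your block algebra is right: $C_\sigma^a(\alpha\cdot N)$ has the stated form, its square has diagonal blocks $(W^aM)^2+(a^2-m^2)(K^aM)^2$ and off-diagonal blocks $(a\pm m)\{W^aM,K^aM\}$, and since $m>0$ at least one of $a\pm m$ is nonzero for every $a\in[-m,m]$, so equating the square with $-\tfrac14 I_4$ gives both relations with $\upsigma\cdot N$ on the right; conjugating by $M=\upsigma\cdot N$ (which satisfies $M^2=I_2$) converts them into the stated identities $(i)$ and $(ii)$. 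The self-adjointness argument via matching blocks of $C_\sigma^a=(C_\sigma^a)^*$, and extracting $K^a=(2m)^{-1}\big((a+m)K^a-(a-m)K^a\big)$, is also fine. The only step that is stated a bit too glibly is the Plancherel identity for the positivity of $K^a$: $f\sigma$ is not an $L^2(\R^3)$ function and the multiplier $(|\xi|^2+m^2-a^2)^{-1}$ is not integrable (and is homogeneous of degree $-2$ at the endpoint $a=\pm m$), so the identity
\begin{equation*}
\langle K^a f,f\rangle_\sigma=c\int\frac{|\widehat{f\sigma}(\xi)|^2}{|\xi|^2+m^2-a^2}\,d\xi
\end{equation*}
needs a short justification, e.g.\ a mollification/monotone-limit argument together with the fact that $\int_{|\xi|\sim R}|\widehat{f\sigma}|^2\lesssim R\,\|f\|_\sigma^2$ for the $2$-dimensional measure $\sigma$; alternatively one can invoke the classical potential-theoretic fact that the energy integral of the Newtonian (or Yukawa) kernel is nonnegative, or write $\langle K^af,f\rangle_\sigma$ as the Dirichlet integral $\int_{\R^3}(|\nabla u|^2+(m^2-a^2)|u|^2)\,d\mu$ of the single layer potential $u$ of $f\sigma$. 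With that small repair the proof is complete and matches the spirit of the cited derivation.
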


For simplicity of notation, we write $k$, $w$, $K$ and $W$ instead of
$k^m$, $w^m$, $K^m$ and $W^m$, respectively. Observe that $k(x)=1/(4\pi|x|)$ and $w(x)=i\upsigma\cdot x/(4\pi|x|^3)$.  If $T$ denotes a bounded operator in $L^2(\sigma)^2$, we write
$\|T\|_\sigma$ instead of $\|T\|_{L^2(\sigma)^2\to L^2(\sigma)^2}$.

The following lemma is essentially contained in \cite{HMMPT}, but we give a simple proof for the sake of completeness.
\begin{lemma}\label{quad isometry}
$\|W\|_\sigma\geq1/2$. Moreover, $\|W\|_\sigma=1/2$ if and only if $\{\upsigma\cdot N,W\}=0$, and in this case $2W$ is an isometry in $L^2(\sigma)^2$.
\end{lemma}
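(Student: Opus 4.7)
The starting point will be Lemma \ref{sphere l1}$(ii)$ specialized to $a=m$, in which the $K$-term drops out and leaves
$$((\upsigma\cdot N)W)^2=-\frac{1}{4}.$$
Writing $A=\upsigma\cdot N$, viewed as a bounded multiplication operator on $L^2(\sigma)^2$, I will use that $A^*=A$ (since $\upsigma\cdot N(x)$ is Hermitian at every $x$) and $A^2=I_2$ (since $|N|=1$ and $(\upsigma\cdot v)^2=|v|^2 I_2$ for every $v\in\R^3$). An immediate consequence is $(AW)^*(AW)=WA^2W=W^2$, so
$$\|AWf\|_\sigma=\|Wf\|_\sigma\quad\text{for every }f\in L^2(\sigma)^2.$$

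To prove $\|W\|_\sigma\geq 1/2$, the plan is to feed $\|(AW)^2 f\|_\sigma=\|f\|_\sigma/4$ into the chain
$$\frac{\|f\|_\sigma}{4}=\|AW(AWf)\|_\sigma=\|W(AWf)\|_\sigma\leq\|W\|_\sigma\|AWf\|_\sigma=\|W\|_\sigma\|Wf\|_\sigma\leq\|W\|_\sigma^2\|f\|_\sigma,$$
where the two middle equalities use the norm identity above. This yields the bound immediately.

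For the characterization of equality, I would argue as follows. Assume $\|W\|_\sigma=1/2$; the two inequalities in the chain above must then be saturated for every $f$, which forces $\|Wf\|_\sigma=\|f\|_\sigma/2$. Since $W$ is self-adjoint, $\|Wf\|_\sigma^2=\langle W^2 f,f\rangle_\sigma$, and $\langle(W^2-I_2/4)f,f\rangle_\sigma=0$ for all $f$ then forces $W^2=I_2/4$; equivalently $(2W)^*(2W)=I_2$, so $2W$ is an isometry. Next, right-multiplying $AWAW=-1/4$ by $W$ and using $W^2=I_2/4$ gives $AWA=-W$, and a further left-multiplication by $A$ produces $\{\upsigma\cdot N,W\}=AW+WA=0$. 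The converse is immediate: if $\{\upsigma\cdot N,W\}=0$, then $WAW=-AW^2$, which plugged into $AWAW=-1/4$ yields $-A^2W^2=-1/4$, i.e.\ $W^2=I_2/4$, from which both $\|W\|_\sigma=1/2$ and the isometry property follow at once.

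The only analytic subtlety I foresee is the polarization step that upgrades the pointwise norm equality $\|Wf\|_\sigma=\|f\|_\sigma/2$ to the operator identity $W^2=I_2/4$; everything else is a purely algebraic consequence of the single identity $((\upsigma\cdot N)W)^2=-1/4$ together with the fact that $\upsigma\cdot N$ is a self-adjoint unitary on $L^2(\sigma)^2$.
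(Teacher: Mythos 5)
Your proposal is correct and follows essentially the same route as the paper: both rest on the identity $((\upsigma\cdot N)W)^2=-\tfrac14$ from Lemma \ref{sphere l1}$(ii)$ at $a=m$, the same chain of norm inequalities for the lower bound $\|W\|_\sigma\geq\tfrac12$, and its saturation when $\|W\|_\sigma=\tfrac12$. The only cosmetic difference is that you upgrade the resulting norm equality to the operator identity $W^2=\tfrac14$ (via polarization and self-adjointness) and then obtain the equivalence with $\{\upsigma\cdot N,W\}=0$ by pure operator algebra, whereas the paper stays at the level of quadratic forms by expanding $\int|\{\upsigma\cdot N,W\}(f)|^2\,d\sigma$; both versions of that step are sound.
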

\begin{proof}
From Lemma \ref{sphere l1}$(ii)$ we have
\begin{equation}\label{quad isometry eq1}
\begin{split}
\frac{1}{16}\int|f|^2\,d\sigma
=\int|((\upsigma\cdot N)W)^2(f)|^2\,d\sigma
\leq\|W\|_\sigma^2\int|W(f)|^2\,d\sigma
\leq\|W\|_\sigma^4\int|f|^2\,d\sigma
\end{split}
\end{equation}
for all $f\in L^2(\sigma)^2$. From this we see that $\|W\|_\sigma\geq1/2$.

On one hand, if $\|W\|_\sigma=1/2$, then (\ref{quad isometry eq1}) yields
\begin{equation}\label{quad isometry eq2}
\begin{split}
\frac{1}{4}\int|f|^2\,d\sigma
=\int|W(f)|^2\,d\sigma\qquad\text{for all }f\in L^2(\sigma)^2,
\end{split}
\end{equation}
which shows that $2W$ is an isometry in $L^2(\sigma)^2$. By Lemma \ref{sphere l1} and (\ref{quad isometry eq2}) we conclude that
\begin{equation*}
\begin{split}
\int|\{\upsigma\cdot N,W\}(f)|^2\,d\sigma
&=\int ((\upsigma\cdot N)W+W(\upsigma\cdot N))(f)
\cdot\overline{((\upsigma\cdot N)W+W(\upsigma\cdot N))(f)}\,d\sigma\\
&=\int\left(|W(f)|^2+|W(\upsigma\cdot N)(f)|^2
-\frac{1}{4}\,|f|^2-\frac{1}{4}\,|(\upsigma\cdot N)f|^2\right)\,d\sigma=0
\end{split}
\end{equation*}
for all $f\in L^2(\sigma)^2$, which implies that $\{\upsigma\cdot N,W\}=0$.

On the other hand, if $\{\upsigma\cdot N,W\}=0$ then, once again by Lemma \ref{sphere l1},
\begin{equation}\label{quad isometry eq3}
\begin{split}
\int|W(f)|^2\,d\sigma
&=\int W(\upsigma\cdot N)(\upsigma\cdot N)W(f)\cdot\overline{f}\,d\sigma\\
&=\int W(\upsigma\cdot N)\{\upsigma\cdot N,W\}(f)\cdot\overline{f}\,d\sigma+\frac{1}{4}\int|f|^2\,d\sigma
=\frac{1}{4}\int|f|^2\,d\sigma
\end{split}
\end{equation}
for all $f\in L^2(\sigma)^2$. In particular, $\|W\|_\sigma=1/2$.
\end{proof}
We must mention that in \cite{HMMPT} the authors show that $\{\upsigma\cdot N,W\}=0$ (or, equivalently, $\|W\|_\sigma=1/2$) if and only if ${\partial\Omega}$ is a plane or a sphere, as commented in the introduction in reference to the isometric character of $2W$.

The following theorem explores the connection between the quadratic form inequality (\ref{quadratic 1}) and the eigenvalues of
$C_\sigma^{\pm m}$, and it is a key ingredient to derive the isoperimetric-type inequalities contained in Theorem \ref{intro thm 1}.
\begin{theorem}\label{quad t1}
Let ${\lambda_\Omega}$ be the infimum over all $\lambda>0$ such that
\begin{equation}\label{quad eq1}
\bigg(\frac{4}{\lambda}\bigg)^2\!\!\int|W(f)|^2\,d\sigma
+\frac{8m}{\lambda}\int K(f)\cdot\overline{f}\,d\sigma
\leq\int|f|^2\,d\sigma
\end{equation}
for all  $f\in L^2(\sigma)^2$. Then, ${\lambda_\Omega}$ is also the infimum over all $\lambda>0$ such that
\begin{equation}\label{quad eq1''}
\int|f|^2\,d\sigma
+2m\lambda\int K(f)\cdot\overline{f}\,d\sigma
\leq\lambda^2\int|W(f)|^2\,d\sigma
\end{equation}
for all $f\in L^2(\sigma)^2$, and the following hold:
\begin{itemize}
\item[$(i)$] $2<4\big(m\|K\|_\sigma
+\sqrt{m^2\|K\|_\sigma^2+1/4}\big)\leq{\lambda_\Omega}\leq
4\big(m\|K\|_\sigma+\sqrt{m^2\|K\|_\sigma^2+\|W\|_\sigma^2}\big),$
\item[$(ii)$] if $\lambda>0$ is such that $\Ker(1/\lambda+C_\sigma^{m})\neq0$ then
$\lambda\leq{\lambda_\Omega}$,
\item[$(iii)$] if $\lambda<0$ is such that $\Ker(1/\lambda+C_\sigma^{-m})\neq0$ then $\lambda\geq-{\lambda_\Omega}$,
\item[$(iv)$] $(\ref{quad eq1})$ holds for all $\lambda\geq{\lambda_\Omega}$ and it is sharp for $\lambda=\lambda_\Omega$. If $\lambda={\lambda_\Omega}>2\sqrt{2}$ then the equality in $(\ref{quad eq1})$ is attained. In this case, the minimizers of $(\ref{quad eq1})$ (that is, functions that attain the equality) give rise to functions in $\Ker(1/{\lambda_\Omega}+C_\sigma^m)$ and vice versa; the same holds replacing $\Ker(1/{\lambda_\Omega}+C_\sigma^m)$ by $\Ker(-1/{\lambda_\Omega}+C_\sigma^{-m})$,
\item[$(v)$] $(iv)$ also holds replacing $(\ref{quad eq1})$ by $(\ref{quad eq1''})$.
\end{itemize}
\end{theorem}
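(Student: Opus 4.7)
\emph{Step 1: equivalence of (\ref{quad eq1}) and (\ref{quad eq1''}).} I would first observe that, at $a=m$, Lemma~\ref{sphere l1}$(ii)$ reads $((\upsigma\cdot N)W)^2=-I/4$, so the substitution $f\mapsto g:=2(\upsigma\cdot N)Wf$ is a bijection on $L^2(\sigma)^2$ (with inverse $g\mapsto-2(\upsigma\cdot N)Wg$). Using that $\upsigma\cdot N$ is pointwise unitary, together with the immediate consequence $W(\upsigma\cdot N)W=-(\upsigma\cdot N)/4$ and the anticommutator of Lemma~\ref{sphere l1}$(i)$, a direct computation yields $\|g\|^2_\sigma=4\|Wf\|^2_\sigma$, $\|Wg\|^2_\sigma=\|f\|^2_\sigma/4$, and $\langle Kg,g\rangle_\sigma=\langle Kf,f\rangle_\sigma$. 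Substituting these into (\ref{quad eq1''}) for $g$ and dividing by $\lambda^2/4$ reproduces (\ref{quad eq1}) for $f$, so the two inequalities have the same admissible set of $\lambda>0$ and share the infimum $\lambda_\Omega$.

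\emph{Step 2: the formula $\lambda_\Omega=4\|C_\sigma^m\|_\sigma$, and parts $(i)$ (upper bound and trivial lower bound), $(ii)$, $(iii)$.} Treating (\ref{quad eq1}) for fixed $f\neq 0$ as a quadratic in $\lambda$ with positive leading coefficient gives that the admissible $\lambda$'s are $[\lambda_+(f),\infty)$ with $\lambda_+(f)=4m\langle Kf,f\rangle_\sigma/\|f\|^2_\sigma+4\sqrt{m^2(\langle Kf,f\rangle_\sigma/\|f\|^2_\sigma)^2+\|Wf\|^2_\sigma/\|f\|^2_\sigma}$, hence $\lambda_\Omega=\sup_{\|f\|_\sigma=1}\lambda_+(f)$. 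Optimizing the numerical range $\langle C_\sigma^m(u,h),(u,h)\rangle_\sigma=2m\langle Ku,u\rangle_\sigma+2\real\langle Wh,u\rangle_\sigma$ first along $h=tWu/\|Wu\|_\sigma$ and then in the scalar ratio $t/\|u\|_\sigma$ reproduces exactly the same expression, so $\lambda_\Omega=4\|C_\sigma^m\|_\sigma$ (the positive side of the numerical range dominates because $K\ge0$). The upper bound in $(i)$ is then just $\langle Kf,f\rangle_\sigma\le\|K\|_\sigma$ combined with $\|Wf\|^2_\sigma\le\|W\|^2_\sigma$, and the strict lower bound $2<4(m\|K\|_\sigma+\sqrt{m^2\|K\|^2_\sigma+1/4})$ is immediate from $m\|K\|_\sigma>0$. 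For $(ii)$, picking $(u,h)\in\Ker(1/\lambda+C_\sigma^m)$ with $\lambda>0$ forces $h=-\lambda Wu$, and pairing the first block equation with $u$ gives $\|u\|^2_\sigma+2m\lambda\langle Ku,u\rangle_\sigma-\lambda^2\|Wu\|^2_\sigma=0$, i.e.\ saturation of (\ref{quad eq1''}) at $f=u$; Step~1 then forces $\lambda\le\lambda_\Omega$. Statement $(iii)$ is the analogous argument applied to $C_\sigma^{-m}$.

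\emph{Step 3: $(iv)$, $(v)$, and the main obstacle in $(i)$.} The first clause of $(iv)$ is by the very definition of $\lambda_\Omega$ as an infimum. For attainment when $\lambda_\Omega>2\sqrt{2}$, I would rewrite (\ref{quad eq1}) as $\langle A_\lambda f,f\rangle_\sigma\ge 0$ for the self-adjoint operator $A_\lambda:=I-16W^2/\lambda^2-8mK/\lambda$, a compact perturbation (through $K$) of $I-16W^2/\lambda^2$; the condition $\lambda_\Omega>2\sqrt{2}$ is precisely what keeps $\lambda_\Omega/4=\|C_\sigma^m\|_\sigma$ separated from the essential spectrum of $C_\sigma^m$, so $A_{\lambda_\Omega}$ is a non-negative Fredholm operator with non-trivial kernel. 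Any $f\in\Ker A_{\lambda_\Omega}$ then produces $(f,-\lambda_\Omega Wf)\in\Ker(1/\lambda_\Omega+C_\sigma^m)$ and conversely, giving the minimizer-eigenvector correspondence. Statement $(v)$ follows by transporting $(iv)$ through the bijection of Step~1. The main remaining difficulty is the second (non-trivial) lower bound in $(i)$, $\lambda_\Omega\ge 4(m\|K\|_\sigma+\sqrt{m^2\|K\|^2_\sigma+1/4})$: testing the sup-formula for $\|C_\sigma^m\|_\sigma$ against a top eigenfunction $u$ of the compact positive operator $K$ gives only $\lambda_\Omega\ge 4(m\|K\|_\sigma+\sqrt{m^2\|K\|^2_\sigma+\|Wu\|^2_\sigma})$, while Lemma~\ref{sphere l1}$(ii)$ yields only $\|Wu\|^2_\sigma\ge\|u\|^2_\sigma/(16\|W\|^2_\sigma)$, which falls strictly below $\|u\|_\sigma^2/4$ whenever $\|W\|_\sigma>1/2$ (i.e.\ outside the sphere). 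Bridging this gap is the heart of the argument: the plan is to enlarge the test family beyond eigenfunctions of $K$ alone and exploit the joint algebraic structure given by the anticommutator of Lemma~\ref{sphere l1}$(i)$ together with $((\upsigma\cdot N)W)^2=-I/4$ of Lemma~\ref{sphere l1}$(ii)$, so as to produce an $f$ with $\langle Kf,f\rangle_\sigma/\|f\|^2_\sigma$ arbitrarily close to $\|K\|_\sigma$ and $\|Wf\|^2_\sigma/\|f\|^2_\sigma\ge 1/4$, thereby reaching the claimed bound.
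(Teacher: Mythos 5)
Your Step 1, the saturation arguments for $(ii)$--$(iii)$, and the observation $\lambda_\Omega=4\|C_\sigma^m\|_\sigma$ are sound (Step 1 is in fact the same substitution $f\mapsto (\upsigma\cdot N)W(f)$ the paper uses, and your pairing argument for $(ii)$--$(iii)$ is an equivalent, slightly slicker route than the paper's reduction to the scalar equation $\big(-\tfrac{8m}{\lambda}K+1-\tfrac{16}{\lambda^2}W^2\big)(f)=0$). But the proposal leaves a genuine gap at exactly the point that matters: the lower bound $\lambda_\Omega\geq 4\big(m\|K\|_\sigma+\sqrt{m^2\|K\|_\sigma^2+1/4}\big)$ in $(i)$, which is the inequality that later drives Lemma \ref{iso l3} and Theorem \ref{intro thm 1}. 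The plan you sketch --- producing $f$ with $\langle K(f),f\rangle_\sigma/\|f\|_\sigma^2$ near $\|K\|_\sigma$ and simultaneously $\|W(f)\|_\sigma^2\geq\tfrac14\|f\|_\sigma^2$ --- is not viable as stated: Lemma \ref{sphere l1}$(ii)$ only yields $\|W(f)\|_\sigma\geq\|f\|_\sigma/(4\|W\|_\sigma)$, the operator inequality $W^2\geq 1/4$ is exactly what may fail off the sphere, and nothing forces near-maximizers of $K$'s quadratic form to see $W$ favorably. The fix is already in your hands from Step 1: since (\ref{quad eq1}) and (\ref{quad eq1''}) are admissible for the same $\lambda$'s, any admissible $\lambda$ satisfies both; multiplying (\ref{quad eq1''}) by $16/\lambda^4$ and adding it to (\ref{quad eq1}) cancels the $\int|W(f)|^2\,d\sigma$ terms and leaves $2m\int K(f)\cdot\overline{f}\,d\sigma\leq\big(\tfrac{\lambda}{4}-\tfrac{1}{\lambda}\big)\int|f|^2\,d\sigma$ for all $f$, hence $2m\|K\|_\sigma\leq\tfrac{\lambda}{4}-\tfrac{1}{\lambda}$, i.e.\ $\lambda^2-8m\|K\|_\sigma\lambda-4\geq0$, which is precisely the claimed bound. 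The missing step is short, but it is the heart of $(i)$ and cannot be left as a plan.

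Two further points in your $(iv)$ need attention. First, the Fredholm/essential-spectrum reasoning requires the compactness of $W^2-\tfrac14$ (the paper's operator $T$, compact because $\partial\Omega$ is smooth, by the arguments of \cite[Lemma 3.5]{AMV1}); declaring $K$ the only compact piece of $A_\lambda=1-16W^2/\lambda^2-8mK/\lambda$ does not localize the essential spectrum, so this fact must be invoked explicitly. Second, ``non-negative Fredholm operator with non-trivial kernel'' at $\lambda=\lambda_\Omega$ is asserted rather than proved: non-negativity is fine, but kernel non-triviality needs an argument --- the paper shows $\|T_\lambda\|_\sigma\leq1$ for $\lambda\geq\lambda_\Omega$ and $>1$ for $\lambda<\lambda_\Omega$, gets $\|T_{\lambda_\Omega}\|_\sigma=1$ by continuity, and uses compactness together with $\lambda_\Omega>2\sqrt2$ to exclude the eigenvalue $-1$; alternatively one can note that the top of the spectrum of the compact operator $16T/\lambda_\Omega^2+8mK/\lambda_\Omega$ equals $1-4/\lambda_\Omega^2>0$ and is therefore an attained eigenvalue. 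Relatedly, $\lambda_\Omega>2\sqrt2$ is not ``precisely'' a separation condition from the essential spectrum (separation would already follow from $\lambda_\Omega>2$); in the paper it enters to guarantee $T_\lambda>-1$, and Remark \ref{constraint remark} explains how it can be weakened. These parts of $(iv)$ are fixable; the lower bound in $(i)$ is the substantive gap.
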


\begin{proof}
Given $\lambda>0$ and $f\in L^2(\sigma)^2$, set
$$A(\lambda,f)=\bigg(\frac{4}{\lambda}\bigg)^2\!\!\int|W(f)|^2\,d\sigma
+\frac{8m}{\lambda}\int K(f)\cdot\overline{f}\,d\sigma.$$

Let us prove $(i)$. Note that
\begin{equation}\label{quad eq2}
A(\lambda,f)\leq\left(\bigg(\frac{4\|W\|_\sigma}{\lambda}\bigg)^2
+\frac{8m\|K\|_\sigma}{\lambda}\right)\|f\|^2_\sigma.
\end{equation}
Hence, if $\lambda\geq4\big(m\|K\|_\sigma+\sqrt{m^2\|K\|_\sigma^2+\|W\|_\sigma^2}\big)$ then (\ref{quad eq2}) easily yields
$A(\lambda,f)\leq\|f\|_\sigma^2$ for all $f\in L^2(\sigma)^2$, which in turn implies that ${\lambda_\Omega}\leq
4\big(m\|K\|_\sigma+\sqrt{m^2\|K\|_\sigma^2+\|W\|_\sigma^2}\big)$.

The inequality from below is a bit more involved. Let $\lambda>0$ be such that
\begin{equation}\label{quad eq3}
A(\lambda,f)\leq\|f\|_\sigma^2\qquad \text{for all }f\in L^2(\sigma)^2.
\end{equation}
If we set $h=\frac{4}{\lambda}(\upsigma\cdot N)W(f)\in L^2(\sigma)^2$, then  $f=-\lambda(\upsigma\cdot N)W(h)$ by Lemma \ref{sphere l1}$(ii)$ taking $a=m$. Furthermore,
\begin{equation}\label{quad eq4}
\int|W(f)|^2\,d\sigma=\bigg(\frac{\lambda}{4}\bigg)^2\!\!\int|(\upsigma\cdot N)h|^2\,d\sigma=\bigg(\frac{\lambda}{4}\bigg)^2
\!\!\int|h|^2\,d\sigma
\end{equation}
and
\begin{equation}\label{quad eq5}
\int|f|^2\,d\sigma
=\lambda^2\int|(\upsigma\cdot N)W(h)|^2\,d\sigma
=\lambda^2\int|W(h)|^2\,d\sigma.
\end{equation}
Moreover, using Lemma \ref{sphere l1},
\begin{equation}\label{quad eq6}
\begin{split}
\int K(f)\cdot\overline{f}\,d\sigma
&=\lambda^2\int K(\upsigma\cdot N)W(h)\cdot\overline{(\upsigma\cdot N)W(h)}\,d\sigma\\
&=-\lambda^2\int K(\upsigma\cdot N)W(\upsigma\cdot N)W(h)\cdot\overline{h}\,d\sigma
=\frac{\lambda^2}{4}\int K(h)\cdot\overline{h}\,d\sigma.
\end{split}
\end{equation}
Gathering (\ref{quad eq3}) with (\ref{quad eq4}), (\ref{quad eq5}) and (\ref{quad eq6}) yields
\begin{equation}\label{quad eq7}
\int|h|^2\,d\sigma
+2m\lambda\int K(h)\cdot\overline{h}\,d\sigma
\leq\lambda^2\int|W(h)|^2\,d\sigma \qquad \text{for all }h\in L^2(\sigma)^2.
\end{equation}
Note that this argument is reversible, thus in particular we have proven that
\begin{equation*}
{\lambda_\Omega}=\inf\left\{\lambda>0:\,\int|f|^2\,d\sigma
+2m\lambda\int K(f)\cdot\overline{f}\,d\sigma
\leq\lambda^2\int|W(f)|^2\,d\sigma \quad \forall\, f\in L^2(\sigma)^2\right\},
\end{equation*}
which yields (\ref{quad eq1''}).
If we multiply (\ref{quad eq7}) by $16/\lambda^4$ we get
\begin{equation*}
\frac{16}{\lambda^4}\int|f|^2\,d\sigma
+\frac{32m}{\lambda^3}\int K(f)\cdot\overline{f}\,d\sigma
\leq\frac{16}{\lambda^2}\int|W(f)|^2\,d\sigma\qquad \text{for all }f\in L^2(\sigma)^2,
\end{equation*}
which added to (\ref{quad eq3}) gives
\begin{equation*}
2m\int K(f)\cdot\overline{f}\,d\sigma
\leq\bigg(\frac{\lambda}{4}-\frac{1}{\lambda}\bigg)\int|f|^2\,d\sigma
\qquad \text{for all }f\in L^2(\sigma)^2.
\end{equation*}
Since $K$ is bounded, positive and self-adjoint, we see from the above inequality that
\begin{equation*}
2m\|K\|_\sigma=2m\sup_{\|f\|_\sigma=1}\int K(f)\cdot\overline{f}\,d\sigma
\leq\frac{\lambda}{4}-\frac{1}{\lambda},
\end{equation*}
which in turn is equivalent to
$$\lambda^2-8m\|K\|_\sigma\lambda-4\geq0,$$
since $\lambda>0$ by assumption. Therefore, we must have
$\lambda\geq4\big(m\|K\|_\sigma
+\sqrt{m^2\|K\|_\sigma^2+1/4}\big)$ for all $\lambda>0$ satisfying (\ref{quad eq3}). This gives the desired inequality from below for ${\lambda_\Omega}$, and finishes the proof of $(i)$. Observe that this lower bound for ${\lambda_\Omega}$ is strictly greater than $2$ because
$\|K\|_\sigma>0$.

We now prove $(ii)$. Assume that $\lambda>0$ is such that $\Ker(1/\lambda+C_\sigma^m)\neq0$. Let $0\neq g\in L^2(\sigma)^4$ be such that $C_\sigma^m(g)=-g/\lambda.$ In view of (\ref{sphere eq1}),
\begin{equation}\label{quad eq8}
\begin{split}
\text{if}\quad g=\left( \begin{array}{rr}  u  \\  h  \end{array} \right)
\quad\text{then}\quad\left\{
\begin{array}{rr}  2mK(u)+W(h)=-u/\lambda,  \\  W(u)=-h/\lambda.  \end{array}\right.
\end{split}
\end{equation}
From Lemma \ref{sphere l1}$(ii)$ and the last equality in (\ref{quad eq8}) we deduce that
\begin{equation}\label{quad eq9'}
u=-4((\upsigma \cdot N)W)^2(u)
=\frac{4}{\lambda}(\upsigma \cdot N)W(\upsigma \cdot N)(h),
\end{equation}
which plugged in the other equation in (\ref{quad eq8}) yields
\begin{equation}\label{quad eq9}
\begin{split}
\left(\frac{8m}{\lambda}K(\upsigma \cdot N)W(\upsigma \cdot N)+W
+\frac{4}{\lambda^2}(\upsigma \cdot N)W(\upsigma \cdot N)\right)(h)=0.
\end{split}
\end{equation}
Using Lemma \ref{sphere l1}, we may write
\begin{equation}\label{quad eq10}
\begin{split}
\frac{8m}{\lambda}&K(\upsigma \cdot N)W(\upsigma \cdot N)+W
+\frac{4}{\lambda^2}(\upsigma \cdot N)W(\upsigma \cdot N)\\
&=-\frac{8m}{\lambda}W(\upsigma \cdot N)K(\upsigma \cdot N)
+W(\upsigma \cdot N)(\upsigma \cdot N)
-\frac{16}{\lambda^2}(W(\upsigma \cdot N))^2(\upsigma \cdot N)W(\upsigma \cdot N)\\
&=W(\upsigma \cdot N)\left(-\frac{8m}{\lambda}K
+1-\frac{16}{\lambda^2}W^2\right)(\upsigma \cdot N).
\end{split}
\end{equation}
Since $W(\upsigma \cdot N)$ is invertible by Lemma \ref{sphere l1}$(ii)$, from (\ref{quad eq9}) and (\ref{quad eq10}) we get that
\begin{equation}\label{quad eq11}
\begin{split}
\left(-\frac{8m}{\lambda}K
+1-\frac{16}{\lambda^2}W^2\right)(f)=0,
\end{split}
\end{equation}
where we have set $f=(\upsigma \cdot N)h$. Note that $u$ is given in terms of $h$ by (\ref{quad eq9'}) and $g\neq0$ by assumption, thus we can also assume that $f\neq0$. In conclusion, we have seen that if $\Ker(1/\lambda+C_\sigma^m)\neq0$ then there exists $0\neq f\in L^2(\sigma)^2$ such that (\ref{quad eq11}) holds. Actually, since all the involved arguments are reversible, we see that
\begin{equation}\label{quad eq12}
\Ker(1/\lambda+C_\sigma^m)\neq0\quad\text{if and only if}\quad
\Ker(-(8m/\lambda)K+1-(16/\lambda^2)W^2)\neq 0.
\end{equation}
Moreover, if we multiply $(\ref{quad eq11})$ by $\overline f$ and we integrate with respect to $\sigma$, using the self-adjointness of $W$ we get
\begin{equation}\label{quad eq16}
A(\lambda,f)=\|f\|_\sigma^2\qquad\text{for all }f\in \Ker(-(8m/\lambda)K+1-(16/\lambda^2)W^2).
\end{equation}
Using that $W$ is invertible and that $K$ is positive, it is easy to show that
$A(\lambda-\epsilon,f)>A(\lambda,f)$ for all $f\neq0$ and all $0<\epsilon<\lambda$. In particular, $A(\lambda-\epsilon,f)>\|f\|_\sigma^2$ for all $0\neq f\in \Ker(-(8m/\lambda)K+1-(16/\lambda^2)W^2)$, which easily implies that $\lambda-\epsilon\leq{\lambda_\Omega}$ for all $0<\epsilon<\lambda$ whenever $\Ker(-(8m/\lambda)K+1-(16/\lambda^2)W^2)\neq0$. Finally, applying (\ref{quad eq12}) and taking $\epsilon\to0$ we conclude that if
$\Ker(1/\lambda+C_\sigma^m)\neq0$ then $\lambda\leq{\lambda_\Omega}$, and the proof of $(ii)$ is complete.

Concerning $(iii)$, if one repeats the arguments used to prove $(ii)$ but on the assumtion that $\Ker(1/\lambda+C_\sigma^{-m})\neq0$, one can show that there exists some $f\in L^2(\sigma)^2$ such that
\begin{equation*}
\begin{split}
0=\left(\frac{8m}{\lambda}K
+1-\frac{16}{\lambda^2}W^2\right)(f)
=\left(-\frac{8m}{|\lambda|}K
+1-\frac{16}{|\lambda|^2}W^2\right)(f),
\end{split}
\end{equation*}
since we are assuming $\lambda<0$. Hence, we are reduced to the case treated in (\ref{quad eq11}) but with the parameter $|\lambda|$. The rest of the proof follows the same lines, getting that $-\lambda=|\lambda|\leq{\lambda_\Omega}$. In particular, we also obtain that
\begin{equation}\label{quad eq12'}
\Ker(1/\lambda+C_\sigma^{-m})\neq0\quad\text{if and only if}\quad
\Ker((8m/\lambda)K+1-(16/\lambda^2)W^2)\neq 0.
\end{equation}

Let us prove $(iv)$. Since $K$ is positive, $A(\lambda,f)$ is a non-increasing function of $\lambda>0$ for all $f\in L^2(\sigma)^2$. By the definiton of ${\lambda_\Omega}$, this monotony implies that (\ref{quad eq1}) holds for all $\lambda\geq{\lambda_\Omega}$ and it is sharp for $\lambda={\lambda_\Omega}$. It remains to be shown that if $\lambda_\Omega>2\sqrt{2}$ then the equality is attained and that the minimizers  give rise to functions in $\Ker(1/{\lambda_\Omega}+C_\sigma^m)$ and vice versa. As we did in (\ref{quad isometry eq3}),
\begin{equation*}
\begin{split}
\int|W(f)|^2\,d\sigma
=\int W(\upsigma\cdot N)\{\upsigma\cdot N,W\}(f)\cdot\overline{f}\,d\sigma+\frac{1}{4}\int|f|^2\,d\sigma
\end{split}
\end{equation*}
for all $f\in L^2(\sigma)^2$. Set
$$T=W(\upsigma\cdot N)\{\upsigma\cdot N,W\}=W^2-\frac{1}{4}.$$
From Lemma \ref{sphere l1} we see that $T$ is self-adjoint and, since $\partial\Omega$ is $\CC^\infty$, it is also compact by the same arguments that prove \cite[Lemma 3.5]{AMV1}. Now, we can write
\begin{equation}\label{quad eq13}
\begin{split}
A(\lambda,f)-\|f\|_\sigma^2=
\int \left(\frac{16}{\lambda^2}\,T+\frac{8m}{\lambda}\,K\right)(f)\cdot\overline{f}\,d\sigma
+\left(\frac{4}{\lambda^2}-1\right)\int|f|^2\,d\sigma.
\end{split}
\end{equation}
Let $\lambda>2$. Then, (\ref{quad eq13}) shows that
\begin{equation}\label{quad eq14}
\begin{split}
A(\lambda,f)\leq\|f\|_\sigma^2\quad \text{if and only if}\quad
\int T_\lambda(f)\cdot\overline{f}\,d\sigma
\leq\int|f|^2\,d\sigma,
\end{split}
\end{equation}
where we have set $$T_\lambda=\frac{4}{\lambda^2-4}\left(4T+2m\lambda K\right),$$
and the same holds replacing ``$\leq$'' by ``$=$'' or ``$>$'' on both sides of (\ref{quad eq14}). Observe that $T_\lambda$ is also self-adjoint and compact (for all real $\lambda\neq\pm2$), since $T$ and $K$ also are. In particular, by \cite[Lemma (0.43)]{Folland}, there exists $0\neq f_\lambda\in L^2(\sigma)^2$ such that
$T_\lambda(f_\lambda)=\|T_\lambda\|_\sigma f_\lambda$ or $T_\lambda(f_\lambda)=-\|T_\lambda\|_\sigma f_\lambda$.

We are going to show that if $\lambda_\Omega>2\sqrt{2}$  then we must have $T_{\lambda_\Omega}(f_{\lambda_\Omega})=\|T_{\lambda_\Omega}\|_\sigma f_{\lambda_\Omega}$ with $\|T_{\lambda_\Omega}\|_\sigma=1$. Using (\ref{quad eq13}) we see that, if $\lambda>2$,
\begin{equation}\label{quad eq14'}
\begin{split}
A(\lambda,f)\geq 0\quad \text{if and only if}\quad
\int T_\lambda(f)\cdot\overline{f}\,d\sigma
\geq-\frac{4}{\lambda^2-4}\int|f|^2\,d\sigma.
\end{split}
\end{equation}
Since $A(\lambda,f)\geq 0$ for all $\lambda>0$ because $K$ is positive, from (\ref{quad eq14'}) we get that
\begin{equation}\label{quad eq14''}
\begin{split}
\int T_\lambda(f)\cdot\overline{f}\,d\sigma
>-\int|f|^2\,d\sigma \qquad \text{for all }0\neq f\in L^2(\sigma)^2\text{ and all }\lambda>2\sqrt{2}.
\end{split}
\end{equation}
Combining (\ref{quad eq14}) and (\ref{quad eq14''}), and using that $T_\lambda$ is self-adjoint, we deduce that
\begin{equation}\label{quad eq14'''}
\begin{split}
\|T_\lambda\|_\sigma=\sup_{\|f\|_\sigma=1}\left|\int T_\lambda(f)\cdot\overline{f}\,d\sigma\right|
\leq1 \qquad \text{for all }\lambda\geq\lambda_\Omega\text{ if }\lambda_\Omega>2\sqrt{2}.
\end{split}
\end{equation}
Furthermore, the definition of $\lambda_\Omega$ and (\ref{quad eq14}) imply that $\|T_\lambda\|_\sigma>1$ for all $\lambda<\lambda_\Omega$. From this and (\ref{quad eq14'''}) we get that if $\lambda_\Omega>2\sqrt{2}$ then $\|T_{\lambda_\Omega}\|_\sigma=1$, since $\|T_\lambda\|_\sigma$ depends continuously on $\lambda$ for all $\lambda>2$. In particular, we have seen that there exists $0\neq f_{\lambda_\Omega}\in L^2(\sigma)^2$ such that
$T_{\lambda_\Omega}(f_{\lambda_\Omega})= f_{\lambda_\Omega}$ or $T_{\lambda_\Omega}(f_{\lambda_\Omega})=- f_{\lambda_\Omega}$. However, if $\lambda_\Omega>2\sqrt{2}$ then (\ref{quad eq14''}) shows that the case $T_{\lambda_\Omega}(f_{\lambda_\Omega})=- f_{\lambda_\Omega}$ is not possible, thus $T_{\lambda_\Omega}(f_{\lambda_\Omega})= f_{\lambda_\Omega}$ as claimed.
From (\ref{quad eq14}), we finally get $A({\lambda_\Omega},f_{\lambda_\Omega})=\|f_{\lambda_\Omega}\|_\sigma^2$, which proves that the equality in $(\ref{quad eq1})$ is attained for $\lambda={\lambda_\Omega}$ on $f_{\lambda_\Omega}$.

Concerning the minimizers of (\ref{quad eq1}), assume that $A({\lambda_\Omega},f)=\|f\|_\sigma^2$ for some $f\neq0$. Then (\ref{quad eq14}) gives
\begin{equation}\label{quad eq15}
\begin{split}
\int T_{{\lambda_\Omega}}(f)\cdot\overline{f}\,d\sigma
=\int|f|^2\,d\sigma.
\end{split}
\end{equation}
Since $T_{{\lambda_\Omega}}$ is a compact self-ajoint operator (so it diagonalizes in an orthonormal basis of eigenvectors) and $\|T_{{\lambda_\Omega}}\|_\sigma=1$ if $\lambda_\Omega>2\sqrt{2}$, by (\ref{quad eq15}) we must have
$T_{{\lambda_\Omega}}(f)= f$. From the definitions of $T_{\lambda_\Omega}$ and $T$, we get
\begin{equation*}
\left(16W^2+8m{\lambda_\Omega} K-{{\lambda_\Omega}^2}\right)(f)=0,
\end{equation*}
which implies that $\Ker(1/{\lambda_\Omega}+C_\sigma^m)\neq0$ by (\ref{quad eq12}) and that
$\Ker(-1/{\lambda_\Omega}+C_\sigma^{-m})\neq0$ by (\ref{quad eq12'}). On the contrary, if $\Ker(1/{\lambda_\Omega}+C_\sigma^m)\neq0$ then (\ref{quad eq12}) and (\ref{quad eq16}) show that there exists $f\neq0$ such that $A({\lambda_\Omega},f)=\|f\|_\sigma^2$, and similarly for $\Ker(-1/{\lambda_\Omega}+C_\sigma^{-m})\neq0$ using (\ref{quad eq12'}) and (\ref{quad eq16}).

Regarding $(v)$, one can check that the conclusions in $(iv)$ also hold when one works with (\ref{quad eq1''}) instead of (\ref{quad eq1}) since these quadratic form inequalities are equivalent (recall the computations carried out between (\ref{quad eq3}) and (\ref{quad eq7})), we leave the details for the reader. The theorem is finally proved.
\end{proof}

\begin{remark}
Gathering (\ref{quad eq12}) and (\ref{quad eq12'}), we get that
\begin{equation}\label{quad rem1 eq1}
\Ker(1/\lambda+C_\sigma^{m})\neq0\quad\text{if and only if}\quad
\Ker(-1/\lambda+C_\sigma^{-m})\neq0.
\end{equation}
This corresponds to the endpoint case $a=m$ in \cite[Theorem 3.6]{AMV2}, thanks to Proposition \ref{spec p1}. The relevant fact here is that, despite that in \cite[Theorem 3.6]{AMV2} we were assuming some invariance of $\sigma$ with respect to reflections in order to obtain the antisymmetry property of the eigenvalues with respect to the potential, (\ref{quad rem1 eq1}) holds without this assumption on $\sigma$.
\end{remark}

\begin{remark}\label{constraint remark}
The assumption $\lambda_\Omega>2\sqrt{2}$ in Theorem \ref{quad t1}$(iv)$ can be weakened using essentially the same arguments as before. Roughly speaking, from (\ref{quad isometry eq1}) one sees that 
$16\|W\|^2_\sigma W^2\geq1$, considering this as an inequality between operators in the sense of quadratic forms. Then
\begin{equation}\label{weakening}
\begin{split}
T_\lambda&=\frac{4\left(4T+2m\lambda K\right)}{\lambda^2-4}
\geq\frac{16T}{\lambda^2-4}=\frac{16W^2-4}{\lambda^2-4}
\geq\frac{1}{\lambda^2-4}\left(\frac{1}{\|W\|^2_\sigma}-4\right).
\end{split}
\end{equation}
The right hand side of (\ref{quad eq14'}) formally corresponds to the limiting case $\|W\|_\sigma=\infty$ in (\ref{weakening}). Since the arguments in the proof of Theorem \ref{quad t1}$(iv)$ require that $T_{\lambda_\Omega}>-1$ in order to get $\|T_{\lambda_\Omega}\|_\sigma=1$ and find the minimizers, in view of (\ref{weakening}) one sees that a possible assumption is 
\begin{equation}\label{weakening1}
\lambda_\Omega>2\sqrt{2-\frac{1}{4\|W\|_\sigma^2}},
\end{equation} 
which is weaker than $\lambda_\Omega>2\sqrt{2}$. Following the arguments in the forthcoming pages of this article until the proof of Theorem \ref{intro thm 1} but using (\ref{weakening1}) instead of $\lambda_\Omega>2\sqrt{2}$, one can see that (\ref{constr q}) can be weakened to 
\begin{equation*}
m\,\frac{\Area(\partial\Omega)}{\Capa(\Omega)}
>\frac{1-\frac{1}{4\|W\|_\sigma^2}}
{4\sqrt{2-\frac{1}{4\|W\|_\sigma^2}}}.
\end{equation*}
However, in what respects to the potential applications of Theorem \ref{intro thm 1} as an isoperimetric-type inequality, one may find bounded domains $\Omega$ with constant $m\Area(\partial\Omega)/\Capa(\Omega)$ but with $\|W\|_\sigma$ arbitrarily large, since this last quantity strongly depends on the abruptness of $\partial\Omega$. As a consequence, in general one has to assume (\ref{constr q}) (or equivalently the limiting case $\lambda_\Omega>2\sqrt{2}$) to make use of Theorem \ref{intro thm 1}. 

\end{remark}

\begin{corollary}\label{quad c1}
Let $\lambda_{ m}^s$ and $\lambda_{- m}^i$ be as in {\em Corollary \ref{elec c1}}. If $\lambda_\Omega>2\sqrt{2}$, then
\begin{equation*}
\begin{split}
{\lambda_\Omega}&=\lambda_{m}^s=-\lambda_{-m}^i\\
&=\sup\{\lambda\in\R:\,\Ker(H+V_{\lambda}-a)\neq0\text{ for some }a\in(-m,m)\}\\
&=-\inf\{\lambda\in\R:\,\Ker(H+V_{\lambda}-a)\neq0\text{ for some }a\in(-m,m)\}\\
&=\sup\{|\lambda|:\,\Ker(H+V_{\lambda}-a)\neq0\text{ for some }a\in(-m,m)\}
\end{split}
\end{equation*}
and $4/{\lambda_\Omega}=\inf\{|\lambda|:\,\Ker(H+V_{\lambda}-a)\neq0\text{ for some }a\in(-m,m)\}$.
\end{corollary}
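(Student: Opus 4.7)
The proof is essentially a matter of combining Theorem \ref{quad t1} with Corollary \ref{elec c1}, so I would first collect the two one-sided inequalities coming from the quadratic form bound. By Theorem \ref{quad t1}$(ii)$, every $\lambda>0$ with $\Ker(1/\lambda+C_\sigma^{m})\neq0$ satisfies $\lambda\leq\lambda_\Omega$, and by Theorem \ref{quad t1}$(iii)$ every $\lambda<0$ with $\Ker(1/\lambda+C_\sigma^{-m})\neq0$ satisfies $\lambda\geq-\lambda_\Omega$. Taking suprema and infima over the corresponding defining sets (which are nonempty and have the sign indicated by Corollary \ref{elec c1}) yields $\lambda_{m}^s\leq\lambda_\Omega$ and $-\lambda_{-m}^i\leq\lambda_\Omega$.

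For the reverse inequalities I would invoke the hypothesis $\lambda_\Omega>2\sqrt{2}$, which places us in the regime of Theorem \ref{quad t1}$(iv)$. That result guarantees that the inequality \eqref{quad eq1} is attained at $\lambda=\lambda_\Omega$ by a nonzero minimizer, and that such minimizers produce nonzero elements of both $\Ker(1/\lambda_\Omega+C_\sigma^{m})$ and $\Ker(-1/\lambda_\Omega+C_\sigma^{-m})$. In particular $\lambda_\Omega$ belongs to the set defining $\lambda_{m}^s$, so $\lambda_\Omega\leq\lambda_{m}^s$; and $-\lambda_\Omega$ belongs to the set defining $\lambda_{-m}^i$, so $\lambda_{-m}^i\leq-\lambda_\Omega$. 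Combining these with the previous paragraph gives $\lambda_\Omega=\lambda_{m}^s=-\lambda_{-m}^i$.

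Having this identity, the remaining equalities follow by direct substitution into Corollary \ref{elec c1}: items $(i)$ and $(ii)$ there identify the sup and inf of admissible $\lambda$'s with $\lambda_{m}^s$ and $\lambda_{-m}^i$ respectively, while items $(iii)$ and $(iv)$ give $\sup\{|\lambda|:\dots\}=\max\{\lambda_{m}^s,-\lambda_{-m}^i\}=\lambda_\Omega$ and $\inf\{|\lambda|:\dots\}=4/\max\{\lambda_{m}^s,-\lambda_{-m}^i\}=4/\lambda_\Omega$. The only delicate point of the argument is the two-way correspondence provided by Theorem \ref{quad t1}$(iv)$ between minimizers of the quadratic form inequality and kernel elements of $1/\lambda_\Omega+C_\sigma^{\pm m}$; once that is available, the corollary reduces to chasing equivalences. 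The assumption $\lambda_\Omega>2\sqrt{2}$ is essential precisely here, since without it one cannot guarantee that the sharp $\lambda$ in \eqref{quad eq1} is realized by an eigenfunction, and the sharp inequality $\lambda_\Omega\leq\lambda_{m}^s$ could in principle fail.
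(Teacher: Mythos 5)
Your proposal is correct and follows essentially the same route as the paper: Theorem \ref{quad t1}$(ii)$--$(iii)$ give $\lambda_{m}^s\leq\lambda_\Omega$ and $\lambda_{-m}^i\geq-\lambda_\Omega$, Theorem \ref{quad t1}$(iv)$ (under $\lambda_\Omega>2\sqrt{2}$) supplies the attained minimizer and hence nontrivial kernels at $\pm\lambda_\Omega$, yielding the reverse inequalities, and the remaining identities are read off from Corollary \ref{elec c1}. Nothing is missing.
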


\begin{proof}
From Corollary \ref{elec c1} we already know that $\lambda_{m}^s>0>\lambda_{-m}^i$. Combining Theorem \ref{quad t1}$(ii)$ and $(iv)$, we easily see that
$$\lambda_{m}^s=\sup\{\lambda\in\R:\,\Ker(1/\lambda+C^{m}_\sigma)\neq0\}={\lambda_\Omega},$$
and similarly, using Theorem \ref{quad t1}$(iii)$ and $(iv)$, we get that $\lambda_{-m}^i=-{\lambda_\Omega}$. Hence, the corollary follows directly from Corollary \ref{elec c1}. Observe that the supremum and the infimum in the definitions of $\lambda_{m}^s$ and $\lambda_{-m}^i$ are a maximum and a minimum, respectively, if $\lambda_\Omega>2\sqrt{2}$.
\end{proof}

\begin{remark}\label{rr1}
Combining the methods used above one can also show that, if $\lambda_\Omega>2\sqrt{2}$,
\begin{equation*}
\begin{split}
4/{\lambda_\Omega}&=\inf\{\lambda>0:\,\Ker(H+V_{\lambda}-a)\neq0\text{ for some }a\in(-m,m)\}\\
&=-\sup\{\lambda<0:\,\Ker(H+V_{\lambda}-a)\neq0\text{ for some }a\in(-m,m)\}.
\end{split}
\end{equation*}
\end{remark}

\section{An isoperimetric-type inequality}\label{ss isoper}
For the sake of clarity, given a bounded open set  $\Omega\subset\R^3$ with smooth boundary, we set $$\Volume(\Omega)=\mu(\Omega)
\quad\text{and}\quad
\Area(\partial\Omega)=\sigma(\partial\Omega).$$
Furthermore, to stress the dependence of $K$ and $W$ on $\sigma$ (that is, on $\partial\Omega$), we write $K_\Omega$ and $W_\Omega$ respectively.

\subsection{A test to exclude constraints on $\Omega$}
In the setting of bounded domains with smooth boundary, due to Theorem \ref{quad t1}$(i)$ we have
$$4\big(m\|K_\Omega\|_\sigma
+\sqrt{m^2\|K_\Omega\|_\sigma^2+1/4}\big)\leq\lambda_\Omega\leq
4\big(m\|K_\Omega\|_\sigma
+\sqrt{m^2\|K_\Omega\|_\sigma^2+\|W_\Omega\|_\sigma^2}\big).$$
Since $\|W_\Omega\|_\sigma^2=1/4$ if and only if $\partial\Omega$ is a sphere (recall \cite{HMMPT}), one may be tempted to look for an isoperimetric-type inequality for $\|K_\Omega\|_\sigma$ so that the ball is a minimizer, and thus obtaining an inequality for $\lambda_\Omega$. In order to do so, one may impose some constraint on the admissible domains because of the rescaling properties of $\|K_\Omega\|_\sigma$ under dilations; if $$\Omega_t=\{tx:\,x\in\Omega\}\quad\text{for }t>0$$ and $\sigma_t$ is the surface measure on $\partial\Omega_t$ then $\|K_{\Omega_t}\|_{\sigma_t}=O(t)$ but
$\|W_{\Omega_t}\|_{\sigma_t}=\|W_{\Omega}\|_{\sigma}=O(1)$.
We are going to present a simple and classical method to test possible constraints that do not permit the existence of domains that minimize $\|K_\Omega\|_\sigma$. Roughly speaking, the method is based on the splitting of a domain into two suitable copies of itself. In particular, it allows us to prove that ``{\em there is no bounded domain with smooth boundary that attains the infimum of $\|K_\Omega\|_\sigma$ over all bounded domains  $\Omega$ with smooth boundary and constant volume}''. The same holds replacing ``volume'' by ``area of the boundary''.

For $t>0$ and $z\in\R^3$, we set
$\Omega_{t,z}=\Omega_t\cup(\Omega_t+z)$ and we denote by $\sigma_{t,z}$ the surface measure on $\partial\Omega_{t,z}$.
We assume that $|z|$ is big enough, so  $\Omega_t\cap(\Omega_t+z)=\emptyset$.

\begin{lemma}\label{iso l2}
Given $\Omega\subset\R^3$ and $t>0$, if $|z|$ is big enough then
\begin{equation*}
\begin{split}
\left|\|K_{\Omega_{t,z}}\|_{\sigma_{t,z}}-t\|K_{\Omega}\|_{\sigma}\right|
\leq\frac{\sigma(\partial\Omega_{t,z})}{2\pi\dist(\partial\Omega_t,\partial\Omega_t+z)}.
\end{split}
\end{equation*}
\end{lemma}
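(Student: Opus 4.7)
My plan is to decompose $L^2(\sigma_{t,z})^2$ as the orthogonal direct sum $L^2(\sigma_t\!\restriction_{\partial\Omega_t})^2\oplus L^2(\sigma_t\!\restriction_{\partial\Omega_t+z})^2$, which is legitimate because $|z|$ is large enough that $\partial\Omega_t$ and $\partial\Omega_t+z$ are disjoint. With respect to this splitting, $K_{\Omega_{t,z}}$ decomposes as a $2\times 2$ block operator $D+R$, where $D$ is the block-diagonal piece (self-interaction of each component) and $R$ is the off-diagonal ``cross'' piece consisting of the interaction between the two components through the kernel $k(x)=(4\pi|x|)^{-1}I_2$. The triangle inequality gives
\begin{equation*}
\bigl|\|K_{\Omega_{t,z}}\|_{\sigma_{t,z}}-\|D\|_{\sigma_{t,z}}\bigr|\leq\|R\|_{\sigma_{t,z}},
\end{equation*}
so the task reduces to identifying $\|D\|_{\sigma_{t,z}}$ and controlling $\|R\|_{\sigma_{t,z}}$.

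First I would handle $D$ using two facts: (a) since the kernel $k$ is translation invariant, the block acting on $L^2(\sigma_t\!\restriction_{\partial\Omega_t+z})^2$ is unitarily equivalent via translation to the block on $L^2(\sigma_t\!\restriction_{\partial\Omega_t})^2$, namely $K_{\Omega_t}$; (b) a straightforward change of variables $y=tx$ combined with the scaling $d\sigma_t(tx)=t^2\,d\sigma(x)$ and $|tx-ty|=t|x-y|$ shows $\|K_{\Omega_t}\|_{\sigma_t}=t\|K_\Omega\|_\sigma$. Since the norm of a block-diagonal operator equals the maximum of the block norms, $\|D\|_{\sigma_{t,z}}=t\|K_\Omega\|_\sigma$.

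Next I would bound $R$ via Schur's test. For $x\in\partial\Omega_t$ and $y\in\partial\Omega_t+z$ (or vice versa), $|x-y|\geq d:=\dist(\partial\Omega_t,\partial\Omega_t+z)$, hence the off-diagonal kernel is uniformly bounded by $1/(4\pi d)$. Schur's test then gives for each off-diagonal block
\begin{equation*}
\|R_{ij}\|\leq\frac{\sigma_t(\partial\Omega_t)}{4\pi d}=\frac{\sigma_{t,z}(\partial\Omega_{t,z})}{8\pi d},
\end{equation*}
using $\sigma_{t,z}(\partial\Omega_{t,z})=2\sigma_t(\partial\Omega_t)$. Since $R$ is purely off-diagonal on an orthogonal splitting, $\|R\|_{\sigma_{t,z}}=\max_{ij}\|R_{ij}\|$, which is well below the claimed bound $\sigma_{t,z}(\partial\Omega_{t,z})/(2\pi d)$.

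I expect no serious obstacle: the only points to check carefully are the scaling identity $\|K_{\Omega_t}\|_{\sigma_t}=t\|K_\Omega\|_\sigma$ and the Schur estimate on the cross kernel; both are essentially routine. The mild subtlety is the matrix-valued target space, but since $k$ is a scalar multiple of $I_2$, all estimates factor through the scalar kernel $(4\pi|x|)^{-1}$ and the argument goes through componentwise.
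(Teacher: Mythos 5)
Your proof is correct, and it reaches the stated bound (in fact a stronger one, with $8\pi$ in place of $2\pi$ in the denominator) by a route that differs from the paper's in its mechanics, though the underlying geometric decomposition is the same. The paper works entirely at the level of quadratic forms: it writes the form $I(f)$ as $I_1+I_2+I_3$ (the two self-interactions plus the cross term), uses positivity and self-adjointness of $K$ to express $\|K_{\Omega_{t,z}}\|_{\sigma_{t,z}}$ as a supremum of $I(f)/\|f\|^2_{\sigma_{t,z}}$, and then has to prove the two inequalities of the absolute value separately -- the upper bound by estimating $I_1+I_2$ and $I_3$, and the lower bound by an explicit construction, namely testing the form on $h=f\chi_{\partial\Omega_t}/\|f\|_{\sigma_t}+g\chi_{\partial\Omega_t+z}/\|g\|_{\sigma_t+z}$ with $f,g$ near-optimizers on each copy. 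Your block-operator formulation replaces this two-sided argument by a single application of the reverse triangle inequality $\bigl|\|D+R\|-\|D\|\bigr|\leq\|R\|$, with $\|D\|_{\sigma_{t,z}}=\|K_{\Omega_t}\|_{\sigma_t}=t\|K_\Omega\|_\sigma$ coming from the block-diagonal structure, translation invariance and the scaling identity (the same identity as the paper's (4.2)), and $\|R\|$ controlled by Schur's test using $|x-y|\geq\dist(\partial\Omega_t,\partial\Omega_t+z)$ on the cross kernel; since the off-diagonal part of a $2\times2$ block operator has norm equal to the maximum of the two block norms, this is immediate. What your approach buys is economy -- no test-function construction for the lower bound, no need to invoke positivity of $K$ to pass between norm and form -- and a better constant; what the paper's form-level argument buys is that it stays entirely within the quadratic-form language already used throughout Sections 4 and 5, at the cost of the extra half-page for the lower bound. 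The only points worth stating explicitly if you wrote this up are the unitary equivalence behind $\|K_{\Omega_t}\|_{\sigma_t}=t\|K_\Omega\|_\sigma$ and the elementary fact that block-diagonal and purely off-diagonal operators on an orthogonal direct sum have norm equal to the maximum of their block norms; both are as routine as you say, and the matrix-valued kernel causes no trouble since $k$ is a scalar multiple of $I_2$.
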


\begin{proof}
Since $K_{\Omega_t}$ is positive and self-adjoint, a change of variables easily yields
\begin{equation}\label{iso eq5}
\begin{split}
\|K_{\Omega_t}\|_{\sigma_t}
&=\sup_{f\neq0}\frac{1}{\|f\|_{\sigma_t}^2}\int K_{\Omega_t}(f)\cdot \overline{f}\,d\sigma_t
=\sup_{f\neq0}\frac{1}{\|f\|_{\sigma_t}^2}
\iint \frac{f(x)\cdot\overline{f(y)}}{4\pi|x-y|}\,d\sigma_t(x)\,d\sigma_t(y)\\
&=\sup_{f\neq0}\frac{t}{\int|f(tx)|^2\,d\sigma(x)}
\iint \frac{f(tx)\cdot\overline{f(ty)}}{4\pi|x-y|}\,d\sigma(x)\,d\sigma(y)=t\|K_{\Omega}\|_{\sigma}.
\end{split}
\end{equation}

Given $f\in L^2(\sigma_{t,z})^2$, set
\begin{equation*}
I(f)=\iint \frac{f(x)\cdot\overline{f(y)}}{4\pi|x-y|}\,d\sigma_{t,z}(x)\,d\sigma_{t,z}(y).
\end{equation*}
Since $\partial\Omega_{t,z}=\partial\Omega_t\cup(\partial\Omega_t+z)$, using Fubini's theorem we can decompose
\begin{equation}\label{iso eq10}
\begin{split}
I(f)\!&=\!
\left(\iint_{\partial\Omega_t\times\partial\Omega_t} \!\!\!\!
+\iint_{(\partial\Omega_t+z)\times(\partial\Omega_t+z)} \!\!\!\!
+2\real\iint_{\partial\Omega_t\times(\partial\Omega_t+z)}\right)\! \frac{f(x)\cdot\overline{f(y)}}{4\pi|x-y|}\,d\sigma_{t,z}(x)\,d\sigma_{t,z}(y)
\\&=:I_1(f)+I_2(f)+I_3(f).
\end{split}
\end{equation}
Note that $\|K_{\Omega_t+z}\|_{\sigma_t+z}=\|K_{\Omega_t}\|_{\sigma_t}$ and
$\|f\|_{\sigma_{t}}^2+\|f\|_{\sigma_{t}+z}^2
=\|f\|_{\sigma_{t,z}}^2$, thus using (\ref{iso eq5}) we get
\begin{equation}\label{iso eq6}
\begin{split}
I_1(f)+I_2(f)
&=\|f\|_{\sigma_{t}}^2\frac{I_1(f)}{\|f\|_{\sigma_{t}}^2}
+\|f\|_{\sigma_{t}+z}^2\frac{I_2(f)}{\|f\|_{\sigma_{t}+z}^2}\\
&\leq \|f\|_{\sigma_{t}}^2\|K_{\Omega_t}\|_{\sigma_t}
+\|f\|_{\sigma_{t}+z}^2\|K_{\Omega_t+z}\|_{\sigma_t+z}
=\|f\|_{\sigma_{t,z}}^2 t\|K_{\Omega}\|_{\sigma}.
\end{split}
\end{equation}
Moreover, by H\"older's inequality,
\begin{equation}\label{iso eq11}
\begin{split}
|I_3(f)|\leq\frac{\int_{\partial\Omega_t}|f|\,d\sigma_{t,z}
\int_{\partial\Omega_t+z}|f|\,d\sigma_{t,z}}{2\pi\dist(\partial\Omega_t,\partial\Omega_t+z)}
\leq\frac{\sigma(\partial\Omega_{t,z})\|f\|^2_{\sigma_{t,z}}}{2\pi\dist(\partial\Omega_t,\partial\Omega_t+z)}.
\end{split}
\end{equation}
Dividing by $\|f\|_{\sigma_{t,z}}^2$ and taking the supremum over all $f\neq0$ in (\ref{iso eq10}), and using (\ref{iso eq6}) and (\ref{iso eq11}), we finally obtain
\begin{equation}\label{iso eq8}
\begin{split}
\|K_{\Omega_{t,z}}\|_{\sigma_{t,z}}
&=\sup_{f\neq0}\frac{I(f)}{\|f\|^2_{\sigma_{t,z}}}
\leq\sup_{f\neq0}\frac{I_1(f)+I_2(f)}{\|f\|^2_{\sigma_{t,z}}}+
\sup_{f\neq0}\frac{I_3(f)}{\|f\|^2_{\sigma_{t,z}}}\\
&\leq t\|K_{\Omega}\|_{\sigma}
+\frac{\sigma(\partial\Omega_{t,z})}{2\pi\dist(\partial\Omega_t,\partial\Omega_t+z)}.
\end{split}
\end{equation}

Given $f\in L^2(\sigma_t)^2$ and $g\in L^2(\sigma_t+z)^2$, set
$h=f\chi_{\partial\Omega_t}/\|f\|_{\sigma_t}
+g\chi_{\partial\Omega_t+z}/\|g\|_{\sigma_t+z}\in L^2(\sigma_{t,z})^2$ (we extended $f$ to be identically zero in $\partial\Omega_t+z$ and analogously for $g$). Then $\|h\|_{\sigma_{t,z}}^2=2$ and, using (\ref{iso eq10}) and (\ref{iso eq11}) on $h$, we get
\begin{equation}\label{iso eq7}
\begin{split}
\frac{I_1(f)}{\|f\|_{\sigma_t}^2}+\frac{I_2(g)}{\|g\|_{\sigma_t+z}^2}
&=I_1(h)+I_2(h)
\leq 2\,\frac{I(h)}{\|h\|_{\sigma_{t,z}}^2}+|I_3(h)|\\
&\leq2\|K_{\Omega_{t,z}}\|_{\sigma_{t,z}}+\frac{\sigma(\partial\Omega_{t,z})}{\pi\dist(\partial\Omega_t,\partial\Omega_t+z)}.
\end{split}
\end{equation}
Taking the supremum over all $f\neq0$ and $g\neq0$ in (\ref{iso eq7}), and since $K_{\Omega_t}$ and $K_{\Omega_t+z}$ are positive and self-adjoint, using (\ref{iso eq5}) we see that
\begin{equation}\label{iso eq9}
\begin{split}
2t\|K_{\Omega}\|_{\sigma}=\|K_{\Omega_t}\|_{\sigma_t}+\|K_{\Omega_t+z}\|_{\sigma_t+z}
\leq2\|K_{\Omega_{t,z}}\|_{\sigma_{t,z}}+\frac{\sigma(\partial\Omega_{t,z})}{\pi\dist(\partial\Omega_t,\partial\Omega_t+z)}.
\end{split}
\end{equation}
The lemma follows from (\ref{iso eq8}) and (\ref{iso eq9}).
\end{proof}

With Lemma \ref{iso l2} at our disposal, we can easily prove that ``{\em there is no bounded open set with smooth boundary that attains the infimum of $\|K_\Omega\|_\sigma$ over all bounded open sets $\Omega$ with smooth boundary and constant volume}'', and that the same holds replacing ``volume'' by ``area of the boundary''. Let $\Omega$ be a bounded open set with smooth boundary. If $|z|$ is big enough,
$\Volume(\Omega_{2^{-1/3},z})
=2\Volume(\Omega_{2^{-1/3}})=\Volume(\Omega)$,
and Lemma \ref{iso l2} shows that
\begin{equation*}
\begin{split}
\|K_{\Omega_{2^{-1/3},z}}\|_{\sigma_{2^{-1/3},z}}
\leq 2^{-1/3}\|K_{\Omega}\|_{\sigma}
+\frac{\sigma(\partial\Omega_{2^{-1/3}})}{\pi\dist(\partial\Omega_{2^{-1/3}},\partial\Omega_{2^{-1/3}}+z)}
<\|K_{\Omega}\|_{\sigma},
\end{split}
\end{equation*}
thus given $\Omega$ we have constructed another bounded domain  $\Omega_{2^{-1/3},z}$ with smooth boundary, with the same volume as $\Omega$, but with a strictly smaller norm of the associated operator $K$. Hence, there can not exists a minimizer. In case that the constraint concerns ``constant area of the boundary'', one only needs to argue with $\Omega_{2^{-1/2},z}$ instead of $\Omega_{2^{-1/3},z}$.

Finally, under the assumption of connectedness, the statement ``{\em there is no bounded domain with smooth boundary that attains the infimum of $\|K_\Omega\|_\sigma$ over all bounded domains  $\Omega$ with smooth boundary and constant volume}'' can be proven with the same arguments as before but connecting, in a smooth way, the two connected components of $\Omega_{t,z}$ (once $t$ and $z$ are properly chosen) by a thin tube and showing that the contribution of the tube in $\|K_{\Omega_{t,z}}\|_{\sigma_{t,z}}$ is as small as we want by taking the tube thin enough, essentially because the kernel $k$ is locally integrable with respect to surface measure. We leave the details for the reader.

\subsection{The relation with the Newtonian capacity}\label{ss Newton}
Given a compact set $E\subset\R^3$, the {\em Newtonian capacity} of $E$ (sometimes referred in the literature as {\em electrostatic} or {\em harmonic capacity}) is defined by
\begin{equation*}
\Capa(E)=\left(\inf_\nu\iint\frac{d\nu(x)\,d\nu(y)}{4\pi|x-y|}\right)^{-1},
\end{equation*}
where the infimum is taken over all probability Borel measures $\nu$ supported in $E$. Sometimes in the literature, the $4\pi$ appearing in the definition of $\Capa(E)$ is changed by another precise constant. For the case of open sets $U\subset\R^3$, one defines
$$\Capa(U)
=\sup\{\Capa(E):\,E\subset U,\, E\text{ compact}\}.$$
The Newtonian capacity has a number of distinguished properties which we state as a lemma for future applications
(see \cite[Chapters 9 and 11]{LL} or \cite{PS}, for example).
\begin{lemma}\label{iso l1}
Let $\Omega$ be a bounded open set with smooth boundary. Then,
\begin{itemize}
\item[$(i)$] $\Capa(\Omega)=\Capa(\overline{\Omega})=\Capa(\partial\Omega)$,
\item[$(ii)$] P\'olya-Szeg\"o inequality: let $\Omega^*$ be the closed ball centered at the origin such that $\Volume(\Omega^*)=\Volume(\Omega)$. Then
$\Capa(\Omega)\geq\Capa(\Omega^*).$ Moreover, the equality holds if and only if $\Omega$ is a ball.
\item[$(iii)$] $\Capa(\Omega)=2(6\pi^2)^{1/3}\Volume(\Omega)^{1/3}$ if $\Omega$ is a ball.
\end{itemize}
\end{lemma}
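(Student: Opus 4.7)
The plan is to prove these three classical potential-theoretic facts; each is standard, so I will be schematic about the arguments.

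For (iii), direct computation works: by rotational symmetry the unique minimising probability measure on $B(0,r)$ is the normalised surface measure on $\partial B(0,r)$, and Newton's shell theorem gives
\[
(G*\nu)(x)=\frac{1}{4\pi\max(|x|,r)},
\]
where $G(x)=1/(4\pi|x|)$ is the Newtonian kernel. Hence the energy equals $1/(4\pi r)$, so $\Capa(B(0,r))=4\pi r$, and inverting $\Volume(B(0,r))=\tfrac{4\pi}{3}r^3$ yields $\Capa(B(0,r))=(48\pi^2\,\Volume)^{1/3}=2(6\pi^2)^{1/3}\Volume^{1/3}$.

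For (i), the equality $\Capa(\Omega)=\Capa(\overline{\Omega})$ follows from inner regularity, since compact subsets $K\subset\Omega$ can be taken to exhaust $\overline{\Omega}$ modulo a tubular neighbourhood of $\partial\Omega$ of arbitrarily small capacity (smoothness makes such tubes shrinkable). For $\Capa(\overline\Omega)=\Capa(\partial\Omega)$, I would argue that the equilibrium measure $\nu^*$ minimising the Newtonian energy on $\overline\Omega$ is already supported on $\partial\Omega$: any interior mass could be replaced by its balayage on $\partial\Omega$, which leaves $G*\nu^*$ unchanged outside $\Omega$ (by harmonicity of $G*\nu^*$ off $\supp\nu^*$ together with the maximum principle) and strictly lowers the energy, contradicting optimality.

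For (ii), I would pass to the equivalent Dirichlet-energy formulation
\[
\Capa(E)=\inf\left\{\int_{\R^3}|\nabla u|^2\,d\mu:\,u\in C_c^\infty(\R^3),\ u\geq1\text{ on }E\right\},
\]
which comes from $G=(-\Delta)^{-1}$ and the identity $\int|\nabla u|^2=\int u(-\Delta u)$ applied to the equilibrium potential. Given any admissible $u$ for $\Omega$, the symmetric decreasing rearrangement $u^*$ satisfies $u^*\geq1$ on $\Omega^*$ because $|\{u\geq1\}|\geq|\Omega|=|\Omega^*|$, and $\int|\nabla u^*|^2\leq\int|\nabla u|^2$ by the classical P\'olya-Szeg\"o inequality; taking the infimum yields $\Capa(\Omega^*)\leq\Capa(\Omega)$. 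The main obstacle is the rigidity statement: one needs the Brothers-Ziemer equality case in P\'olya-Szeg\"o, applied to the equilibrium potential of $\Omega$ (which is real analytic off $\partial\Omega$, so has no critical level sets of positive measure), to force the level sets to be concentric balls and therefore $\Omega$ itself to be a ball.
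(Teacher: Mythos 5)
The paper does not prove this lemma at all: it is quoted as a collection of classical facts with a pointer to \cite[Chapters 9 and 11]{LL} and \cite{PS}, so any honest proof you give is necessarily ``a different route''. Your sketches for (iii), for the second equality in (i) via balayage, and for the inequality in (ii) via the Dirichlet-energy formulation plus symmetric decreasing rearrangement are the standard arguments and are consistent in normalization (your computation $\Capa(B_r)=4\pi r$ matches $2(6\pi^2)^{1/3}\Volume^{1/3}$), and you correctly flag that the rigidity in (ii) needs the Brothers--Ziemer equality case applied to the equilibrium potential (where one must in fact verify the hypothesis on the rearranged function $u^*$, e.g.\ via absolute continuity of the distribution function, not merely analyticity of $u$ off $\partial\Omega$; this is delicate but standard).

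There is, however, one genuine error: your argument for $\Capa(\Omega)=\Capa(\overline{\Omega})$ rests on the claim that a thin tubular neighbourhood $T_\epsilon$ of $\partial\Omega$ has ``arbitrarily small capacity''. That is false. Newtonian capacity in $\R^3$ does not become small on neighbourhoods of a two-dimensional surface: by monotonicity $\Capa(T_\epsilon)\geq\Capa(\partial\Omega)$, and by the very identity you are proving this equals $\Capa(\overline{\Omega})$, which is comparable to $\diam(\Omega)$ (only sets of Hausdorff dimension at most one can have zero Newtonian capacity). So the subadditivity step $\Capa(\overline{\Omega})\leq\Capa(K)+\Capa(T_\epsilon)$ gives nothing as $\epsilon\to0$. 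A correct route, using the smoothness you have at hand: let $K_\epsilon=\{x\in\Omega:\ \dist(x,\partial\Omega)\geq\epsilon\}$ and push the equilibrium measure of $\partial\Omega$ forward onto $\partial K_\epsilon$ by the inward normal map, which for small $\epsilon$ is bi-Lipschitz with constant $1+O(\epsilon)$; the energy of the pushed-forward measure is at most $(1+O(\epsilon))$ times the original, whence $\Capa(K_\epsilon)\geq(1+O(\epsilon))^{-1}\Capa(\partial\Omega)=(1+O(\epsilon))^{-1}\Capa(\overline{\Omega})$, and letting $\epsilon\to0$ gives $\Capa(\Omega)\geq\Capa(\overline{\Omega})$; the reverse inequality is monotonicity. (Equivalently, one can invoke continuity of capacity along increasing sequences together with the fact that q.e.\ point of the smooth boundary is regular, but some such argument is needed; the ``small tube'' claim cannot be repaired.)
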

Regarding the uniqueness of the minimizer in Lemma \ref{iso l1}$(ii)$, it is important to impose some restriction on $\partial\Omega$ (such as  regularity) in order to avoid sets of Newtonian capacity zero.

\begin{lemma}\label{iso l3}
Let $\Omega\subset\R^3$ be a bounded domain with smooth boundary. Then
\begin{equation}\label{iso eq1}
\lambda_{\Omega}\geq
4\Bigg(m\,\frac{\Area(\partial\Omega)}{\Capa(\Omega)}+\sqrt{m^2\left(\frac{\Area(\partial\Omega)}{\Capa(\Omega)}\right)^2
+\frac{1}{4}}\Bigg),
\end{equation}
and the equality holds if and only if $\Omega$ is a ball.
\end{lemma}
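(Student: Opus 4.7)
The plan is to combine the lower bound from Theorem \ref{quad t1}$(i)$,
\begin{equation*}
\lambda_\Omega\geq 4\bigl(m\|K_\Omega\|_\sigma+\sqrt{m^2\|K_\Omega\|_\sigma^2+1/4}\bigr),
\end{equation*}
with a simple isoperimetric-type bound $\|K_\Omega\|_\sigma\geq\Area(\partial\Omega)/\Capa(\Omega)$. Since the function $t\mapsto 4(mt+\sqrt{m^2t^2+1/4})$ is strictly increasing on $[0,\infty)$, these two facts yield (\ref{iso eq1}) immediately.

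The bound $\|K_\Omega\|_\sigma\geq\Area(\partial\Omega)/\Capa(\Omega)$ comes from testing the positive self-adjoint operator $K_\Omega$ against a constant $\C^2$-valued function. For $f\equiv v$ on $\partial\Omega$,
\begin{equation*}
\|K_\Omega\|_\sigma\geq\frac{\langle K_\Omega f,f\rangle_\sigma}{\|f\|_\sigma^2}=\frac{1}{\Area(\partial\Omega)}\iint\frac{d\sigma(x)\,d\sigma(y)}{4\pi|x-y|}.
\end{equation*}
Since $\sigma/\Area(\partial\Omega)$ is a probability measure supported on $\partial\Omega=\partial\overline{\Omega}$, Lemma \ref{iso l1}$(i)$ and the very definition of Newtonian capacity give
\begin{equation*}
\Capa(\Omega)^{-1}=\Capa(\overline{\Omega})^{-1}\leq\frac{1}{\Area(\partial\Omega)^2}\iint\frac{d\sigma(x)\,d\sigma(y)}{4\pi|x-y|},
\end{equation*}
and combining these two displays proves the claim.

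For the rigidity, suppose equality holds in (\ref{iso eq1}). Then both inequalities just established must be equalities: in particular, the normalized surface measure attains the infimum in the definition of $\Capa(\Omega)$, i.e., it is the equilibrium measure for $\overline{\Omega}$. By the characterization of \cite{Reichel1,Reichel2}, and because $\partial\Omega$ is of class $\mathcal{C}^\infty$, this forces $\Omega$ to be a ball. Conversely, if $\Omega$ is a ball, spherical symmetry implies that the normalized surface measure is the equilibrium measure and the constant function realizes $\|K_\Omega\|_\sigma$, giving $\|K_\Omega\|_\sigma=\Area(\partial\Omega)/\Capa(\Omega)$; moreover by Lemma \ref{quad isometry} (and \cite{HMMPT}) one has $\|W_\Omega\|_\sigma=1/2$, so the upper and lower bounds in Theorem \ref{quad t1}$(i)$ coincide, forcing $\lambda_\Omega=4(m\|K_\Omega\|_\sigma+\sqrt{m^2\|K_\Omega\|_\sigma^2+1/4})$ and hence equality in (\ref{iso eq1}).

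The main obstacle is the rigidity statement in the equality case: the inequality $\lambda_\Omega\geq 4(\cdots)$ arises as the composition of two separate inequalities, each of which must be saturated at the optimum. Fortunately, the external input of \cite{Reichel1,Reichel2} supplies precisely the nontrivial rigidity needed, namely that the equilibrium measure for the Newtonian capacity coinciding with the normalized surface measure characterizes balls among smooth bounded domains. The rest of the argument is a short Rayleigh-quotient computation and the monotonicity of the relevant scalar function.
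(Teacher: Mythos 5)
Your proposal is correct and follows essentially the same route as the paper: the lower bound from Theorem \ref{quad t1}$(i)$ combined with the Rayleigh-quotient and capacity estimate $\|K_\Omega\|_\sigma\geq\Area(\partial\Omega)/\Capa(\Omega)$ (constant test function plus Lemma \ref{iso l1}$(i)$), the Reichel rigidity result for the equality case, and $\|W_\Omega\|_\sigma=1/2$ via Lemma \ref{quad isometry} for the ball. The only step you assert rather than prove is that constants actually realize $\|K_\Omega\|_\sigma$ when $\Omega$ is a ball; the paper justifies this by the generalized Young (Schur-test) bound $\|K_\Omega\|_\sigma\leq\|k(\cdot-re_3)\|_{L^1(\sigma)}$ together with rotation invariance, a one-line argument worth including since symmetry alone does not immediately give it.
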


\begin{proof}
Since $K_\Omega$ is a positive self-adjoint operator, we have
\begin{equation}\label{iso eq3}
\begin{split}
\|K_\Omega\|_\sigma
&=\sup_{f\neq0}\frac{1}{\|f\|_\sigma^2}\int K_\Omega(f)\cdot \overline{f}\,d\sigma
\geq\frac{1}{\sigma(\partial\Omega)}\int K_\Omega\left(\!\!\left(\begin{array}{c} 1\\
0 \end{array}\right)\!\!\right)\cdot \overline{\left(\begin{array}{c} 1\\
0 \end{array}\right)}\,d\sigma\\
&=\sigma(\partial\Omega)\iint \frac{1}{4\pi|x-y|}
\,\frac{d\sigma(x)}{\sigma(\partial\Omega)}
\,\frac{d\sigma(y)}{\sigma(\partial\Omega)}
\geq\frac{\Area(\partial\Omega)}{\Capa(\Omega)},
\end{split}
\end{equation}
where we also used Lemma \ref{iso l1}$(i)$ in the last inequality above.
Gathering (\ref{iso eq3}) and Theorem \ref{quad t1}$(i)$, we get (\ref{iso eq1}).

Assume that $\Omega$ is a ball of radius $r>0$ centered at the origin. Then, for any $x,y\in\partial\Omega$,
\begin{equation*}
\begin{split}
(\upsigma\cdot N(x))(\upsigma\cdot (x-y))
&=\frac{1}{r}(\upsigma\cdot x)(\upsigma\cdot (x-y))
=\frac{1}{r}(r^2-(\upsigma\cdot x)(\upsigma\cdot y))\\
&=-\frac{1}{r}(-r^2+(\upsigma\cdot x)(\upsigma\cdot y))
=-\frac{1}{r}(\upsigma\cdot (x-y))(\upsigma\cdot y))\\
&=-(\upsigma\cdot (x-y))(\upsigma\cdot N(y)).
\end{split}
\end{equation*}
This identity easily yields $\{\upsigma\cdot N,W_\Omega\}=0$ and, by Lemma \ref{quad isometry},  $\|W_\Omega\|_\sigma=1/2$. Therefore, from Theorem \ref{quad t1}$(i)$ we get that
\begin{equation}\label{iso eq4}
\begin{split}
\lambda_\Omega=4\big(m\|K_\Omega\|_\sigma
+\sqrt{m^2\|K_\Omega\|_\sigma^2+1/4}\big)
\end{split}
\end{equation}
if $\Omega$ is a ball.
Let $e_3=(0,0,1)\in\R^3$, and we identify the matrix $k$ with its scalar version. Following \cite[Generalized Young's Inequality (0.10)]{Folland} and since $\partial\Omega$ is invariant under rotations, it is easy to see that
\begin{equation*}
\begin{split}
\|K_\Omega\|_{\sigma}\leq\|k(\cdot-re_3)\|_{L^1(\sigma)}
=\frac{1}{\sigma(\partial\Omega)}\iint\frac{d\sigma(y)\,d\sigma(x)}{4\pi|re_3-y|}
=\frac{1}{\sigma(\partial\Omega)}\iint\frac{d\sigma(y)\,d\sigma(x)}{4\pi|x-y|}.
\end{split}
\end{equation*}
In particular, this shows that the first inequality in (\ref{iso eq3}) is an equality if $\Omega$ is a ball. It is well-known that the infimum in the definition of $\Capa(\Omega)$ is attained on the normalized surface measure $\sigma/\sigma(\partial\Omega)$ when $\Omega$ is a ball (the Newtonian potential of $\sigma/\sigma(\partial\Omega)$ corresponds to the harmonic function that takes the constant value 1 on $\partial\Omega$, zero at infinity and minimizes the exterior Dirichlet energy), thus the second inequality in (\ref{iso eq3}) is also an equality in this case. Therefore,
$\|K_\Omega\|_\sigma=\Area(\partial\Omega)/\Capa(\Omega)$ if $\Omega$ is a ball, which combined with (\ref{iso eq4}) proves that (\ref{iso eq1}) is an equality in this case.

On the contrary, assume that (\ref{iso eq1}) is an equality. From (\ref{iso eq3}) and Theorem \ref{quad t1}$(i)$ we see that the second inequality in (\ref{iso eq3}) must be an equality, which means that the infimum in the definition of $\Capa(\Omega)$ is attained on $\sigma/\sigma(\partial\Omega)$. In the literature, the probability measure that gives the minimum in $\Capa(\Omega)$ is referred as {\em equilibrium distribution}. Let us recall Gruber's conjecture (see \cite[Section 4.1]{HMMPT}): ``the equilibrium distribution of $\overline\Omega$ is $c\sigma$ for some $c>0$ if and only if $\Omega$ is a ball''. In \cite{Reichel1} and \cite{Reichel2}, the author shows that Gruber's conjecture holds in the case of $\CC^{2,\epsilon}$-domains. Putting all together, we see that if $\Omega$ is a bounded and smooth domain such that the equality in (\ref{iso eq1}) holds, the equilibrium distribution of $\overline\Omega$ is $\sigma/\sigma(\partial\Omega)$, which implies that $\Omega$ is a ball by Gruber's conjecture. The lemma is finally proved.
\end{proof}

Despite $(\ref{iso eq1})$ is sharp, it may not be a completely satisfactory inequality in the sense that the right hand side involves some ``obscure'' term, namely $\Capa(\Omega)$, from a measure theoretic point of view. It would be interesting to derive some related inequality such that the right hand side only involves $\Area(\partial\Omega)$ and/or $\Volume(\Omega)$. This is precisely the purpose of the following corollary, where and isoperimetric-type inequality for the product $\lambda_{\Omega}\Capa(\Omega)$ is derived.

\begin{corollary}\label{iso c1}
Let $\Omega\subset\R^3$ be a bounded open set with smooth boundary. Then
\begin{equation*}
\lambda_{\Omega}\Capa(\Omega)\geq
4\left(m\Area(\partial\Omega)+\sqrt{m^2\Area(\partial\Omega)^2
+{6^{2/3}\pi^{4/3}}
\Volume(\Omega)^{2/3}}\right),
\end{equation*}
and the equality holds if and only if $\Omega$ is a ball.
\end{corollary}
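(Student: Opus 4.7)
The plan is to deduce this corollary directly by combining Lemma \ref{iso l3} with the Newtonian capacity estimates provided by Lemma \ref{iso l1}. Since $\Capa(\Omega)>0$, I would simply multiply both sides of the inequality $\lambda_\Omega\geq 4\big(m\,\Area(\partial\Omega)/\Capa(\Omega)+\sqrt{m^2(\Area(\partial\Omega)/\Capa(\Omega))^2+1/4}\big)$ through by $\Capa(\Omega)$ and pull the factor inside the square root, obtaining
$$\lambda_\Omega\Capa(\Omega)\geq 4\left(m\Area(\partial\Omega)+\sqrt{m^2\Area(\partial\Omega)^2+\tfrac{\Capa(\Omega)^2}{4}}\right).$$

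Next, I would estimate $\Capa(\Omega)^2/4$ from below. Let $\Omega^*$ denote the ball centered at the origin with $\Volume(\Omega^*)=\Volume(\Omega)$. By the P\'olya--Szeg\"o inequality (Lemma \ref{iso l1}$(ii)$) we have $\Capa(\Omega)\geq\Capa(\Omega^*)$, and by Lemma \ref{iso l1}$(iii)$ this latter quantity equals $2(6\pi^2)^{1/3}\Volume(\Omega)^{1/3}$. Squaring and dividing by $4$ gives
$$\frac{\Capa(\Omega)^2}{4}\;\geq\;(6\pi^2)^{2/3}\Volume(\Omega)^{2/3}=6^{2/3}\pi^{4/3}\Volume(\Omega)^{2/3}.$$
Inserting this bound inside the square root of the previous display yields the stated inequality at once, since the map $t\mapsto\sqrt{m^2\Area(\partial\Omega)^2+t}$ is monotone increasing.

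For the equality characterization, note that the two steps above produce a chain of two inequalities. The first (coming from Lemma \ref{iso l3}) is an equality if and only if $\Omega$ is a ball; the second (coming from P\'olya--Szeg\"o together with the explicit formula for $\Capa(\Omega^*)$) is an equality if and only if $\Omega$ is a ball, again by Lemma \ref{iso l1}$(ii)$. Hence equality in the corollary forces $\Omega$ to be a ball. Conversely, if $\Omega$ is a ball of radius $r$, then Lemma \ref{iso l3} holds with equality and $\Capa(\Omega)=2(6\pi^2)^{1/3}\Volume(\Omega)^{1/3}$, so a direct substitution recovers equality in the corollary.

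There is essentially no technical obstacle: the argument is an algebraic combination of two previously established facts. The only point requiring a small amount of care is the equality analysis, where one must verify that equality in the final inequality propagates backward to equality in both intermediate steps; this follows from the strict monotonicity of $t\mapsto\sqrt{m^2\Area(\partial\Omega)^2+t}$ and the positivity of $\Capa(\Omega)$, so that multiplying by $\Capa(\Omega)$ and the substitution in the square root are both equality-preserving operations.
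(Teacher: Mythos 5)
Your proposal is correct and follows exactly the paper's route: it multiplies the inequality of Lemma \ref{iso l3} by $\Capa(\Omega)$ and then lower-bounds $\Capa(\Omega)$ via the P\'olya--Szeg\"o inequality and the explicit value $\Capa(\Omega^*)=2(6\pi^2)^{1/3}\Volume(\Omega)^{1/3}$ from Lemma \ref{iso l1}$(ii)$--$(iii)$, with the equality case traced back through both steps. Your write-up merely spells out the algebra and the equality propagation that the paper leaves implicit, so no gaps to report.
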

\begin{proof}
From Lemma \ref{iso l1}$(ii)$ and $(iii)$, we get
\begin{equation*}
2(6\pi^2)^{1/3}\Volume(\Omega)^{1/3}
=2(6\pi^2)^{1/3}\Volume(\Omega^*)^{1/3}
=\Capa(\Omega^*)\leq\Capa(\Omega),
\end{equation*}
and the equality holds if and only if $\Omega$ is a ball.
The corollary follows from this and (\ref{iso eq1}).
\end{proof}

\begin{proof}[Proof of {\em Theorem \ref{intro thm 1}}]
This is a straightforward application of Corollary \ref{quad c1} and Lemma \ref{iso l3}, just observe that if
$4\sqrt{2}m\Area(\partial\Omega)/\Capa(\Omega)>1$ then $\lambda_\Omega>2\sqrt{2}$.
\end{proof}

\begin{remark}\label{rr2}
Combining Remark \ref{rr1} with Corollary \ref{quad c1} and Lemma \ref{iso l3}, we see that Theorem \ref{intro thm 1} also holds replacing
\begin{center}
$\sup\{|\lambda|:\ldots\}$ by $\sup\{\lambda>0:\ldots\}$ or $-\inf\{|\lambda<0|:\ldots\}$, and\\
$\inf\{|\lambda|:\ldots\}$ by $\inf\{\lambda>0:\ldots\}$ or $-\sup\{\lambda<0:\ldots\}$.
\end{center}
\end{remark}

\end{document}